\newtheorem{theorem}{Theorem}
\newtheorem{lemma}{Lemma}
\newtheorem{corollary}{Corollary}
\theoremstyle{definition}
\newtheorem{definition}{Definition}
\algrenewcommand\algorithmicrequire{\textbf{Input: }}
\algrenewcommand\algorithmicensure{\textbf{Output: }}
\DeclareMathOperator{\win}{win}
\DeclareMathOperator{\conv}{conv}
\DeclareMathOperator{\ext}{ext}
\DeclareMathOperator{\DAG}{DAG}
\DeclareMathOperator{\suc}{succ}
\newcommand{\graphicaltest}{\Xi}
\newcommand{\cc}[1]{\operatorname{\mathsf{#1}}}
\newcommand{\decisionproblem}[3]{%
\begin{center}
  \begin{tabularx}{\columnwidth}{@{}lX@{}}
    \toprule
    \multicolumn{2}{@{}l@{}}{\textsc{#1}}\tabularnewline
    \midrule
    \bfseries Input: & #2 \\
    \bfseries Question: & #3 \\
    \bottomrule
  \end{tabularx}
\end{center}}
\newlength{\vertexradius}
\def\radiusmain{1}
\tikzset{ arc/.style={thick,-stealth, shorten <=\vertexradius, shorten >=\vertexradius} }
\tikzset{ arcrev/.style={thick,stealth-, shorten <=\vertexradius, shorten >=\vertexradius} }
\newcommand{\arc}[0]{\!\raisebox{0.6ex}{\tikz[baseline]{\draw[arc] (0,0) -- ++(13pt,0);}}\!}
\journal{Special Issue on Quantum Physics and Logic}
\begin{document}
\begin{frontmatter}
\title{Graphical Tests of Causality}
\author[usi,figa]{Ämin Baumeler} 
\author[ulb]{Eleftherios-Ermis Tselentis} 
\author[usi,figa]{Stefan Wolf} 
\affiliation[usi]{
  organization={Faculty of Informatics, Università della Svizzera italiana},
  addressline={\\Via la santa 1},
  city={Lugano-Viganello},
  postcode={6962},
  country={Switzerland}
}
\affiliation[figa]{
  organization={Facoltà indipendente di Gandria},
  addressline={Lunga scala 1},
  city={Gandria},
  postcode={6978},
  country={Switzerland}
}
\affiliation[ulb]{
  organization={Centre for Quantum Information and Communication, École polytechnique de Bruxelles, CP 165, Université libre de Bruxelles},
  addressline={50 av.\ F.D.\ Roosevelt},
  city={Brussels},
  postcode={1050},
  country={Belgium}
}

\begin{abstract}
  Bell inequalities limit the possible observations of non-communicating parties.
  Here, we present analogous inequalities for \emph{any number of communicating parties} under the causal constraints of
  \emph{static causal order,}
  \emph{definite causal order,}
  and \emph{bi-causal order.}
  All derived inequalities are remarkably simple.
  They correspond to upper bounds on the winning chance in \emph{graphical games:}
  Given a specific directed graph over the parties, the parties are challenged to communicate along a randomly chosen arc.
  In the case of \emph{definite causal order,} every game that we find is specified by a \emph{kefalopoda digraph.}
  Based on this we define \emph{weakly causal correlations} as those that satisfy all kefalopoda inequalities.
  We show that the problem of deciding whether some correlations are weakly causal is solvable in \emph{polynomial time in the number of parties.}
\end{abstract}

\begin{keyword}
causal order \sep games \sep polytopes \sep digraphs
\end{keyword}
\end{frontmatter}

\section{Introduction, and Results}
Bell tests~\cite{bell1964,brunner2014} are central for separating quantum capabilities from classical ones.
If some observations at spacelike-separated regions violate a~Bell inequality, then a classical description of these observations is untenable:
They cannot be reproduced using a predefined ``lookup table'' shared among those regions.
A prime example~\cite{bell1964,clauser1969} of a Bell inequality,
which we explain further in Fig.~\ref{fig:bellpr}, is\footnote{%
  We use uppercase Latin letters for random variables and their lowercase versions for their value.
}%
\begin{align}
  \Pr[A\oplus B = XY] \leq 3/4
  \,.
  \label{eq:bell}
\end{align}
A paramount strength of this approach is that observed data only is used.
The specifics of the underlying theory are irrelevant.
Beyond their fundamental significance, Bell inequalities find applications in, e.g.,
self-testing of quantum devices~\cite{mayers1998,supic2020},
device-independent cryptography~\cite{bhk05,colbeckphd2009,scarani12,vaziranividick14,reviewdi23},
certification of randomness~\cite{colbeckphd2009,pironio2010,aaronson2023},
and classical proofs of quantum computation~\cite{kalai2023,broadbent2025}.
\begin{figure}
  \centering
  \subfloat[\label{subfig:bell}]{%
    \definecolor{sepia}{RGB}{224, 201, 166}
    \begin{tikzpicture}
      \foreach \party/\x/\O/\I in {Alice/-2/$a$/$x$, Bob/2/$b$/$y$} {
        \pgfmathsetmacro{\lx}{\x-0.5}
        \pgfmathsetmacro{\rx}{\x+0.5}
        \pgfmathsetmacro{\top}{-0.5}
        \pgfmathsetmacro{\bot}{-1.7}
        \pgfmathsetmacro{\nr}{7}
        \pgfmathsetmacro{\step}{(\top-\bot)/\nr}
        \pgfmathsetmacro{\tstep}{2*\step/3}
        \node at (\x,0) {\textbf{\party}};
        \fill[sepia!80,rounded corners=1pt] (\lx, \bot) rectangle (\rx, \top);
        \foreach \y in {1, ..., \nr} {
          \pgfmathsetmacro{\yy}{\y * \step}
          \draw[brown,decorate,decoration={snake,amplitude=0.7,segment length=4}] (\lx+0.05, \bot+\yy-\tstep) -- (\rx-0.05, \bot+\yy-\tstep);
        }
        \pgfmathsetmacro{\yc}{(\top+\bot)/2}
        \pgfmathsetmacro{\alen}{0.2}
        \draw[thick,->] (\rx,\yc) -- ++(\alen,0) node[right] {\O};
        \draw[thick,<-] (\lx,\yc) -- ++(-\alen,0) node[left] {\I};
      }
      \draw[dashed,gray] (0,0.3) -- (0,-2.25);
    \end{tikzpicture}
  }
  \hfill
  \subfloat[\label{subfig:pr}]{%
    \begin{tikzpicture}
      \node at (-2,0) {\textbf{Alice}};
      \node at ( 2,0) {\textbf{Bob}};
      \draw[fill=black,radius=0.1]
        (-2,-0.6) circle node[left] {$0$}
        -- node[midway,above] {$=$}
        ( 2,-0.6) circle node[right] {$0$}
        -- node[very near end,above] {$=$}
        (-2,-1.6) circle node[left] {$1$}
        -- node[midway,below] {$\neq$}
        ( 2,-1.6) circle node[right] {$1$}
        -- node[very near start,above] {$=$}
        (-2,-0.6);
    \end{tikzpicture}
  }
  \caption{%
    (a)~Alice and Bob are situated at distant locations without the possibility to communicate and share some resource.
    Alice computes the outcome~$a\in\{0,1\}$ from her measurement setting~$x\in\{0,1\}$.
    Similarly, Bob computes~$b\in\{0,1\}$ from~$y\in\{0,1\}$.
    If the resource is a ``lookup table,'' i.e., a classical object such as a book or a shared random variable, then their correlations cannot violate the Bell inequality~\eqref{eq:bell}.
    In contrast, if the resource is some quantum state, then they may violate that inequality.
    (b)~The predicate of the Bell inequality~\eqref{eq:bell} requires both outcomes to be different in the case~\mbox{$x=y=1$}, and equal otherwise.
    An immediate logical consequence of this is that no ``lookup table'' exist that satisfies the predicate,
    i.e., the constraint~\mbox{$a(0)=b(0)=a(1)\neq b(1)=a(0)$} is unsatisfiable~\cite{abramsky2012}, and at most three-out-of-four (in)equalities may be upheld simultaneously.
  }
  \label{fig:bellpr}
\end{figure}
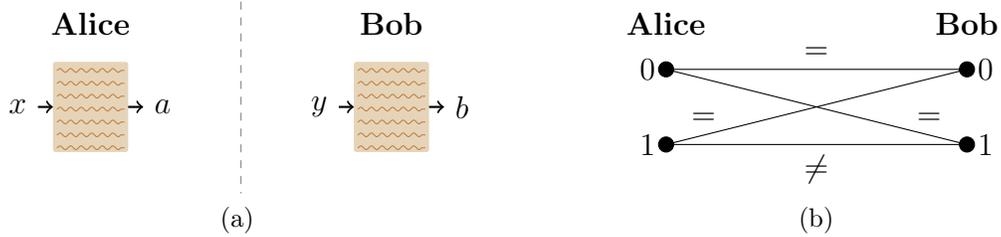

Recently, similar inequalities were derived for ``causality.''
Both, Bell inequalities and these inequalities for ``causality'' operate on the observed statistics only.
The difference lies in the causal assumptions.
In the Bell case, any communication is forbidden.
For~$n$ parties~$[n]:=\{0,1,\dots,n-1\}$ with the outcomes~$\underline A:=(A_i)_{i\in[n]}$ and the settings~$\underline X:=(X_i)_{i\in[n]}$, this means that the correlations decompose as
\begin{align}
  P_{\underline A|\underline X}
  =
  \sum_\lambda
  P_\Lambda(\lambda)
  \prod_{i\in[n]}
  P_{A_i | X_i, \Lambda=\lambda}
  \,;
  \label{eq:local}
\end{align}
there exists a probabilistic ``lookup table''~$\Lambda$, commonly known as ``hidden variable,''
through which each party~$i$ \emph{separately} derives the outcome~$a_i$ from the setting~$x_i$.

If, in contrast to the Bell case, we allow the parties to communicate, then the set of possible correlations enlarges.
With unrestricted communication, any distribution~$P_{\underline A|\underline X}$ is achievable.\footnote{%
  A trivial communication scheme to implement any distribution is guided by the decomposition~$P_{\underline A|\underline X}=P_{A_0|\underline X}P_{A_1|\underline X,A_0}P_{A_2|\underline X,A_0,A_1}\dots$:
  Every party~$i$ broadcasts its setting~$x_i$ to all parties, then party~$0$ computes~$a_0$ and broadcasts it, after which party~$1$ computes~$a_1$ and broadcasts it, and so forth.
}
Here, we focus on the following three causal constraints, which restrict communication in natural ways:
\begin{enumerate}
  \item \textbf{Static causal order~\cite{baumeler2014,oreshkov2016,tselentis2023}.}
    The parties are arranged along a~\emph{partial order,} i.e., all communication must respect that partial order.
    This is the most restrictive constraint we consider.
    Recently, this constraint was used to derive ``Bell tests'' of relativity under which deviations from special relativity can be detected~\cite{tselentis2023,tselentis2024arxiv}.
  \item \textbf{Definite causal order~\cite{ocb2012,baumeler2014,oreshkov2016,branciard2015}.}
    This constraint relaxes the previous one by allowing each party to arrange the causal relation among the parties within its future.
    This relaxation is motivated by general relativity, where \emph{anything within the future lightcone} --- hence also the causal order --- may be influenced.
  \item \textbf{Bi-causal order~\cite{abbott2017,baumeler2022}.}
    The only restriction is that one group of parties cannot communicate to any party outside of the group.
    A~violation of this constraint captures the \emph{genuinely multi-party character} of correlations incompatible with any causal order.
    This constraint is analogous to bi-locality~\cite{svetlichny1987} and other multi-party constraints~\cite{gallego2012,bancal2013} in the field of Bell nonlocality.
\end{enumerate}

\subsection{Results}
The Bell constraint (Eq.~\eqref{eq:local}), as well as the constraints of ``causality'' studied here, restrict the correlations~$P_{\underline A|\underline X}$ to belong to some convex polytope.
In a previous article~\cite{tselentis2023}, we focused on \emph{static causal order} in a simple scenario for any number of parties.
Here, we extend these studies to the constraints of \emph{definite causal order} as well as \emph{bi-causal order,}
and derive some of the facet-defining inequalities of the corresponding convex polytopes.
As it is the case for static causal order~\cite{tselentis2023}, we find that all derived tests are \emph{graphical:}
They admit a natural representation as directed graphs (digraphs).
In the following table, we summarize the correspondences between the causal constraints and digraphs:
\begin{table}[h!]
  \begin{center}
    \begin{tabular}[c]{|l|l|}
      \hline
      \textbf{Causal Constraint}
      &
      \textbf{Graphical Games}
      \\
      \hline
      Static causal order~~\hfill(Def.~\ref{def:static})
      &
      $k$-cycle, $k$-fence, $k$-Möbius~~\hfill(Cor.~\ref{cor:cyclefencemobius})
      \\
      Definite causal order~~\hfill(Def.~\ref{def:causal})
      &
      kefalopoda~~\hfill(Cor.~\ref{cor:kefalopoda})
      \\
      Bi-causal order~~\hfill(Def.~\ref{def:bicausal})
      &
      minimally strong~~\hfill(Cor.~\ref{cor:minimal})
      \\
      \hline
    \end{tabular}
  \end{center}
\end{table}

The \emph{graphical game} represented by some digraph~$D$ is straightforward:
Each vertex~$v\in\mathcal V(D)$ denotes a party, and given a random arc~$i\arc j\in\mathcal A(D)$, party~$i$ must communicate a bit to party~$j$.
In Fig.~\ref{fig:graphs}, we illustrate such games.
\begin{figure}
  \subfloat[\label{subfig:moebius}]{%
    \begin{tikzpicture}
      \def\tot{7}
      \pgfmathsetmacro{\angledelta}{360/\tot}
      \def\rl{1.5}
      \foreach \x in {1, 2, 3, ..., \tot} {
        \pgfmathsetmacro{\angle}{(\x-1) * \angledelta + \angledelta/2 + 90}
        \pgfmathsetmacro{\ra}{iseven(\x) ? \radiusmain : \rl}
        \pgfmathsetmacro{\rb}{isodd(\x) ? \radiusmain : \rl}
        \fill (\angle:\radiusmain) circle (\vertexradius);
        \fill (\angle:\rl) circle (\vertexradius);
        \draw[arc] (\angle:\ra) -- (\angle:\rb);
        \ifnum\x=1
          \draw[arc] (\angle:\rb) arc[start angle=\angle, delta angle=\angledelta,radius=\rb];
          \draw[arc] (\angle:\rb) to[out=\angle-90,in=180] (\angle-\angledelta:\ra);
        \else\ifnum\x=\tot
          \draw[arc] (\angle:\rb) to[out=\angle+90,in=0] (\angle+\angledelta:\ra);
          \draw[arc] (\angle:\rb) arc[start angle=\angle, delta angle=-\angledelta,radius=\rb];
        \else
          \draw[arc] (\angle:\rb) arc[start angle=\angle, delta angle=\angledelta,radius=\rb];
          \draw[arc] (\angle:\rb) arc[start angle=\angle, delta angle=-\angledelta,radius=\rb];
        \fi \fi
      }
    \end{tikzpicture}
  }
  \hfill
  \subfloat[\label{subfig:kefalopoda}]{%
    \begin{tikzpicture}
      \def\tot{5}
      \pgfmathsetmacro{\angledelta}{360/\tot}
      \foreach \x in {1, 2, 3, ..., \tot} {
        \pgfmathsetmacro{\angle}{(\x-1) * \angledelta + \angledelta/2 + 90}
        \fill (\angle:\radiusmain) circle (\vertexradius);
        \draw[arc] (\angle:\radiusmain) arc[start angle=\angle, delta angle=-\angledelta,radius=\radiusmain];
      }
      \pgfmathsetmacro{\angle}{(2-1) * \angledelta + \angledelta/2 + 90}
      \fill (\angle:\radiusmain) -- ++(220:1) coordinate (p) circle (\vertexradius);
      \draw[arc] (\angle:\radiusmain) -- (p);
      \fill (\angle:\radiusmain) -- ++(260:1) coordinate (p) circle (\vertexradius);
      \draw[arc] (\angle:\radiusmain) -- (p);
      \pgfmathsetmacro{\angle}{(4-1) * \angledelta + \angledelta/2 + 90}
      \fill (\angle:\radiusmain) -- ++(300:0.7) coordinate (p) circle (\vertexradius);
      \draw[arc] (\angle:\radiusmain) -- (p);
      \fill (p) -- ++(210:0.5) coordinate (q) circle (\vertexradius);
      \draw[arc] (p) -- (q);
      \fill (p) -- ++(300:0.5) coordinate (r) circle (\vertexradius);
      \draw[arc] (p) -- (r);
    \end{tikzpicture}
  }
  \hfill
  \subfloat[\label{subfig:minstrong}]{%
    \begin{tikzpicture}
      \def\tot{3}
      \pgfmathsetmacro{\angledelta}{360/\tot}
      \foreach \x in {1, 2, 3, ..., \tot} {
        \pgfmathsetmacro{\angle}{(\x-1) * \angledelta + \angledelta/2 - 90}
        \fill (\angle:\radiusmain) circle (\vertexradius);
        \draw[arc] (\angle:\radiusmain) arc[start angle=\angle, delta angle=-\angledelta,radius=\radiusmain];
        \fill (\angle:\radiusmain) -- ++(\angle:\radiusmain) coordinate (a) circle (\vertexradius);
        \draw[arc] (\angle:\radiusmain) to[bend left] (a);
        \draw[arc] (a) to[bend left] (\angle:\radiusmain);
      }
    \end{tikzpicture}
  }
  \caption{%
    Examples of three graphical games of causality:
    (a)~a Möbius digraph to test \emph{static causal order,}
    (b)~a kefalopoda digraph to test \emph{definite causal order,}
    and (c)~a minimally strong digraph to test \emph{bi-causal order.}
  }
  \label{fig:graphs}
\end{figure}
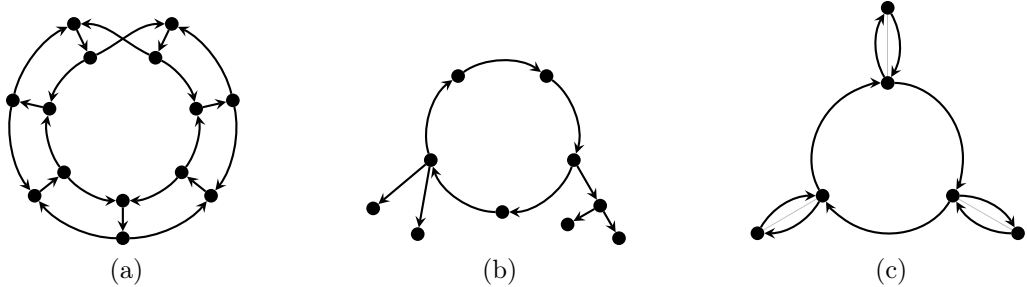
An example of a graphical game is the~$2$-cycle on two vertices
\raisebox{0.5ex}{\tikz[baseline]{%
    \fill (0,0) coordinate (a) circle (0.5ex);
    \fill (1,0) coordinate (b) circle (0.5ex);
    \draw[arc] (a) to[bend left] (b);
    \draw[arc] (b) to[bend left] (a);
}
}.
This game amounts to a test of all three constraints (for two parties, the notions of static causal order, definite causal order, and bi-causal order coïncide).
If the two parties are arranged along a partial order, then they will \emph{fail} in at least one of two cases,
i.e., whenever the selected arc disagrees with the partial order.
In that case, the receiving party correctly guesses the bit with half probability.
Overall, the chance of winning this game is upper bounded by~$(1+1/2)/2=3/4$.
A~violation of this inequality proves the incompatibility of the correlations with the constraint of a (static/definite/bi-)causal order.

Generally, it is difficult~\cite{pitowsky1991} to decide whether a given distribution~$P_{\underline A|\underline X}$ belongs to a polytope.
A naïve approach is to enumerate all \emph{facet-defining inequalities,}
and to verify that~$P_{\underline A|\underline X}$ satisfies each one of them.
In our case, however, complete enumerations are unknown.
Moreover, numerical computations suggest that in the case of \emph{static causal order} and \emph{bi-causal order,} novel families of inequalities emerge with an increasing number of parties;
the more parties are considered, the more nontrivial limitations are encountered.
In contrast, the inequalities that define the set of correlations compatible with \emph{definite causal order} seem to follow a simple pattern.
This motivates us to define \emph{weakly causal correlations} as the set of correlations that satisfy all these inequalities.
We then present an algorithm that runs in polynomial time in the length of the input (in fact, even in number of parties) to decide membership in that polytope:
Deciding whether some correlations are weakly causal is in~$\cc{P}$.

\subsection{Outlook}
In the next section we present some basic notation, and the required preliminaries on polytope theory and graph theory.
After that, we introduce the \emph{single-output scenario} and establish a map from polytopes to sets of digraphs.
This is followed by the definition of graphical tests, and the study of each of the causal constraints.
After that, we define \emph{weakly causal correlations} and present the efficient algorithm for deciding membership.
Finally, we conclude and discuss some open questions.

\section{Preliminaries}
The number of parties is always denoted by~$n$, and it is at least~$2$.
Conditional probability distributions are expressed analogous to~$P_{A,B|X,Y}$, and its probabilities are~$p(a,b|x,y)$.
Often we refer to the distribution via its probabilities only.
We use~$[\ell]$ for the set~$\{0,1,\dots,\ell-1\}$, and~$[\ell]^2_{\neq}$ for the set of distinct pairs over~$[\ell]$.
The infix operator~$\oplus$ denotes addition modulo~$2$.
On~$0/1$ vectors, the operator acts element-wise.
We use bold letters for vectors.
Their dimension is often implicit.
The entries of doubly-indexed vectors, e.g.,~$\bm v=(\bm v_{i,j})_{i,j}$, are ordered lexicographically.
These entries may also be addressed using a single index.
The vector~$\bm 0$ is the all-zero vector, and the vector~$\bm 1_i$ is~$\bm 0$ with a~$1$ at coordinate~$i$.
Linear inequalities are pairs~$(\bm v, c)$ and state~$\bm v\cdot\bm c =\sum_i \bm v_i \bm x_i \leq c$ for any~$\bm x$.
A linear inequality~$(\bm v, c)$ is \emph{non-negative,} if all entries of~$\bm v$ are non-negative,
and it is \emph{nontrivial,} if two or more entries are nonzero.
Usually, we use calligraphic letters for sets.
If some function $f$ is defined on the elements of a set~$\mathcal S$,
then~$f(\mathcal S)$ is~$\{f(x)|x\in\mathcal S\}$.
The \emph{cardinality} of a set~$\mathcal S$ is~$|\mathcal S|$.

\subsection{Polytope Theory}
All polytopes~$\mathcal P\subseteq \mathbb R^d$ we consider are convex.
In its V-representation, a~polytope is the convex hull of a finite set of vectors~$\mathcal S\subseteq \mathbb R^d$ ($\mathcal P = \conv(\mathcal S))$.
Equivalently, in its H-representation, a polytope is the intersection of finitely many halfspaces ($\mathcal P=\{\bm x \mid \bm A \bm x \leq \bm z\}$).
The set of \emph{extremal points} of~$\mathcal P$ is~$\ext(\mathcal P):=\{\bm p \in\mathcal P\mid \bm p\not\in\conv(\mathcal P\setminus\{\bm p\})\}$.
An inequality~$(\bm v, c)$ is \emph{valid for~$\mathcal P$,} if it satisfied by all elements of~$\mathcal P$.
The \emph{dimension~$\dim(\mathcal P)$ of~$\mathcal P$} is the dimension of its affine hull, and~$\mathcal P$ is full-dimensional if~$\dim(\mathcal P)=d$.
For a~full-dimensional polytope~$\mathcal P$, an inequality~$(\bm v,c)$ is \emph{facet-defining} whenever it is valid for~$\mathcal P$ with~$\dim(\mathcal P \cap \{\bm p \mid \bm v\cdot\bm p=c\})=d-1$.

\subsection{Graph Theory}
Throughout this article, all graphs are directed and simple.
A digraph~$D$ is defined as the tuple~$(\mathcal V(D), \mathcal A(D))$, where~$\mathcal V(D)$ is the set of \emph{vertices,}
and~$\mathcal A(D)\subseteq \{ i\arc j | i,j\in\mathcal V(D), i\neq j\}$ is the set of \emph{arcs.}
The \emph{order} of a~digraph~$D$ is~$|\mathcal V(D)|$.
A \emph{path} is a sequence of arcs~$(v_0\arc v1, v1\arc v2,\dots)$ where vertices are not revisited.
A \emph{cycle} is a path where the first and the last vertex coïncide.
A~$k$-cycle is a cycle with~$k$ arcs.
A digraph is \emph{acyclic} if it does not contain any cycle.
The set~$\DAG_n$ is the set of all acyclic digraphs~$D$ with~$\mathcal V(D)=[n]$.
The \emph{in-degree}~$\deg_D^\text{in}(j)$ of~$j\in\mathcal V(D)$ in~$D$ is~$|\{i\mid i\arc j\in\mathcal A(D)\}|$.
A \emph{source} is a vertex with zero in-degree.
For a~digraph~$D$ with~$\mathcal V(D)=[n]$, we use~$\bm\alpha(D)\in\{0,1\}^{n(n-1)}$ for the adjacency vector with~$\bm\alpha(D)_{i,j}=1\Leftrightarrow i\arc j\in\mathcal A(D)$.
Two digraphs~$D,E$ are \emph{isomorphic,} denoted by~$D\cong E$, if they are the same up to a relabeling of the vertices.
For a digraph~$D$ and a vertex~$i\in\mathcal V(D)$, we use~$\suc_D(i)$ for the set of all successors of~$i$ in~$D$, i.e., all vertices~$j$ for which there exists a path from~$i$ to~$j$.
A digraph~$D$ is \emph{strong} if for all~$i\in\mathcal V(D)$ we have~$\suc_D(i)=\mathcal V(D)$.
A \emph{strongly connected component of~$D$} is a set~$\mathcal C\subseteq\mathcal V(D)$ with~$\forall i \in\mathcal C:\suc_D(i)=\mathcal C$ that cannot be enlarged.
The vertices~$\mathcal V(D)$ can always be partitioned into strongly connected components~$\{\mathcal C_i\}_i$.
The digraph over all strongly connected components with an arc~$\mathcal C_i \arc \mathcal C_j$ whenever there is some~$u\in\mathcal C_i$ and~$v\in\mathcal C_j$ with~$u\arc v\in\mathcal A(D)$, is the \emph{condensation of~$D$.}
The condensation is acyclic.

\section{A Simple Scenario}
Bell tests~\cite{brunner2014} always involve the outputs of \emph{two or more} parties.
In the class of XOR games~\cite{cleve2004}, for instance, the objective of the~$n$ parties is to produce Boolean outputs such that their parity~$\bigoplus_{i\in[n]}A_i$ equals some function of their inputs.
In fact, Bell games become trivial if all outputs but of one party are ignored.
Here, in contrast, it turns out that considering a \emph{single} output is sufficient to make nontrivial statements.
The \emph{single-output} scenario we consider is the following (see also Fig.~\ref{fig:scenario}).
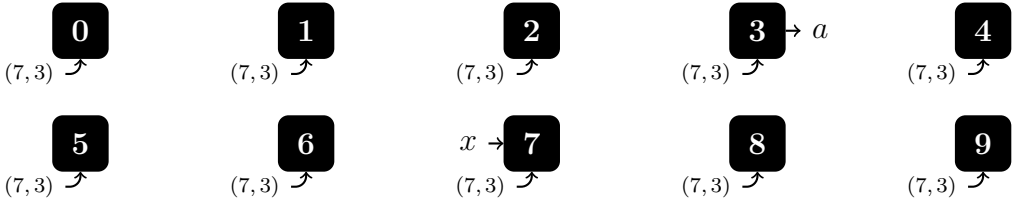
\begin{figure}
  \def\s{7}
  \def\r{3}
  \centering
  \begin{tikzpicture}
    \def\alen{0.2}
    \def\partywidth{0.75}
    \pgfmathsetmacro{\partyxspace}{3*\partywidth}
    \pgfmathsetmacro{\partyyspace}{1*\partywidth}
    \def\tot{10}
    \def\ncols{5}
    \foreach \i in {1, 2, ..., \tot} {
      \pgfmathsetmacro{\label}{int(\i-1)}
      \pgfmathsetmacro{\col}{mod(\label,\ncols)}
      \pgfmathsetmacro{\row}{int(\label/\ncols)}
      \pgfmathsetmacro{\xcoordinate}{\col*(\partywidth+\partyxspace)}
      \pgfmathsetmacro{\ycoordinate}{-\row*(\partywidth+\partyyspace)}
      \pgfmathsetmacro{\xcenter}{\xcoordinate+\partywidth/2}
      \pgfmathsetmacro{\ycenter}{\ycoordinate+\partywidth/2}
      \fill[rounded corners] (\xcoordinate,\ycoordinate) rectangle ++(\partywidth,\partywidth) node[pos=.5] {\color{white}{$\bm \label$}};
      \ifnum\label=\s
        \draw[thick,<-] (\xcoordinate,\ycenter) -- ++(-\alen,0) node[left] {$x$};
      \fi
      \ifnum\label=\r
        \draw[thick,->] (\xcoordinate+\partywidth,\ycenter) -- ++(\alen,0) node[right] {$a$};
      \fi
      \draw[thick,<-,rounded corners] (\xcenter,\ycoordinate) |- ++(-\alen,-\alen) node[left] {\scriptsize$(\s,\r)$};
    }
  \end{tikzpicture}
  \caption{%
    Illustration of the single-output scenario with~$n=10$.
    Each box represents a~party.
    The particular arrangement of the parties in this figure is arbitrary.
    Each party receives as an input the labels of the \emph{sender} and \emph{receiver,} here~$(s=\s,r=\r)$.
    The sender~$s$ has the additional input~$x$, and the receiver is the only party who produces an output ($a$).
    Different~causal~constraints~limit the set of attainable correlations~$P_{A|S,R,X}$~in~different~ways.
  }
  \label{fig:scenario}
\end{figure}
\begin{definition}[Single-output scenario]
  The~$n$ parties involved in the scenario are~$[n]$.
  The party that produces an output is~$r\in[n]$, called \emph{receiver,}
  and the output is a bit~$a\in\{0,1\}$.
  Moreover, a single party~$s\in[n]$ different from the receiver has a Boolean input~$x\in\{0,1\}$.
  We call this party \emph{the sender.}
  Finally, all parties~$[n]$ share the tuple~$(s,r)\in [n]^2_{\neq}$ as input.
  Correlations in this \emph{single-output scenario} are described by conditional probability distributions of the form~$P_{A|S,R,X}$.
\end{definition}
This definition does not specify in any way how the parties are connected or what signals may be transmitted.
These causal constraints will be enforced on top of this scenario in Section~\ref{sec:graphicalgames}.
But before doing so, we discuss the representation of correlations in this this scenario, and some general properties.

\subsection{Geometric Representation}
For~$n$ parties, the conditional probability distribution~$P_{A|S,R,X}$ is fully specified by the~$2n(n-1)$ probabilities~$p(0|s,r,x)$ for the receiver producing output~$0$.
Therefore, a specific~$P_{A|S,R,X}$ is represented by the probability vector~$\bm p:=(\bm p^0, \bm p^1)\in\mathbb R^{2d}$, with~$\bm p^x:=(p(0|s,r,x))_{s,r}\in\mathbb R^{d}$, and~$d:=n(n-1)$.
If we request that~$P_{A|S,R,X}$ must satisfy some causal constraint, then we focus on the body~$\mathcal P\subseteq\mathbb R^{2d}$ of such vectors in agreement with that constraint.
The dimension of the body in this scenario scales \emph{at most quadratically} with the number of parties.
This is in radical contrast to the usual Bell scenarios, where the dimensions grow exponentially in~$n$.

\subsection{Basic Properties}
As we show below, all causal constraints of interest produce bodies~$\mathcal P\subseteq\mathbb R^{2d}$ that satisfy some useful properties.
\begin{definition}[Operational 0/1 polytope,~$\Delta_n$]
  The body~$\mathcal P\subseteq \mathbb R^{2d}$ of single-output correlations is an \emph{operational 0/1 polytope} if and only if
  it is a convex polytope with~$\ext(\mathcal P)\subseteq\{0,1\}^{2d}$ and
  \begin{align}
    (\bm p^0, \bm p^1) \in \ext(\mathcal P)
    \Longrightarrow
    \forall (s, r)\in[n]^2_{\neq}:
    (\bm p^0 \oplus \bm 1_{(s, r)},
    \bm p^1 \oplus \bm 1_{(s, r)})
    \in \ext(\mathcal P)
    \,.
    \label{eq:opclosed}
  \end{align}
  The set of all such polytopes for~$n$ parties is~$\Delta_n$.
\end{definition}
For some~$\mathcal P$ to be \emph{a~$0/1$ polytope} means that the set of single-output correlations is closed under convex combinations,
and that the extremal points correspond to deterministic behaviors.
The last property (Eq.~\eqref{eq:opclosed}) states the \emph{operational closure} of the set.
In order to see that, focus on some specific sender~$\hat s$ and receiver~$\hat r$, and consider the entries
\begin{equation}
  p(0 | \hat s, \hat r, 0) = c_0
  \,,
  \quad
  p(0 | \hat s, \hat r, 1) = c_1
  \,,
\end{equation}
with~$c_0,c_1 \in \{0,1\}$.
A first observation is that whenever~$c_0 \neq c_1$, then the output of~$\hat r$ depends on the input of~$\hat s$: the sender~$\hat s$ signals to the receiver~$\hat r$.
In particular, the case~$c_0=1$ and~$c_1=0$ describes the situation where the receiver~$\hat r$ deterministically outputs~$x$ whenever~$\hat s$ is the sender.
By flipping these entries we end in the situation where~$\hat r$ deterministically outputs~$x\oplus 1$ whenever~$\hat s$ is the sender.
The remaining two cases, i.e., when~$c_0=c_1=0$ and when~$c_0=c_1=1$, describe situations where~$\hat r$ outputs a constant~$1$ ($0$) whenever~$\hat s$ is the sender.
Now, suppose some correlations~$P_{A|S,R,X}$ satisfy a~causal constraint (which we did not specify yet),
then, the same correlations where party~$\hat r$ flips the output whenever the sender is~$\hat s$ also satisfy the same constraint.
After all, this amounts to a basic relabeling of party~$\hat r$'s output.

\subsection{Signaling Relations}
The operational closure in the above definition induces an equivalence relations on the extremal points~$\ext(\mathcal P)\in\{0,1\}^{2d}$:
the vectors~$\bm p$ and~$\bm q$ are equivalent if and only if~$\bm q$ is reachable from~$\bm p$ through the repetitive application of the implication~\eqref{eq:opclosed}.
This equivalence relation is compactly defined as
\begin{equation}
  \bm p \sim \bm q
  :\Longleftrightarrow
  \left(\bm p^0 \oplus \bm p^1\right)
  =
  \left(\bm q^0 \oplus \bm q^1\right)
  \,.
\end{equation}
Since the vector~$\bm p^0 \oplus \bm p^1$ contains a Boolean entry for every~$(i,j)\in[n]^2_{\neq}$,
each equivalence class~$[\bm p]$ itself is a \emph{digraph:}
\begin{equation}
  \mathcal V([\bm p])
=
[n]
\,,
\quad
\mathcal A([\bm p])
=
\{i \arc j \mid (\bm p^0 \oplus \bm p^1)_{i,j} = 1\}
\,.
\end{equation}
This digraph has a natural interpretation.
The presence of the arc~$i\arc j$ states that the probabilities~$p(0|i,j,0)$ and~$p(0|i,j,1)$ \emph{differ.}
Conversely, the absence of the arc~$i\arc j$ states~$p(0|i,j,0)=p(0|i,j,1)$.
So, by going back to the intuition built above,~$[\bm p]$ has an arc~$i\arc j$ if and only if the correlations~$P_{A|S,R,X}$ are signaling from~$i$ to~$j$.
This, in conclusion, provides us with a map~$\Gamma$ from \emph{operational 0/1 polytopes} of single-output correlations
to \emph{sets of digraphs} that encode signaling relations:
\begin{alignat}{3}
  \label{eq:gamma}
  \Gamma
  &: \Delta_n &\rightarrow 2^{\mathcal G_n}
  &&\ :: \mathcal P &\mapsto \{[\bm p] \mid \bm p\in\ext(\mathcal P)\}
  \,,
\end{alignat}
where~$2^{\mathcal G_n}$ is the powerset of the set of digraphs over~$[n]$.

\subsection{Polytope of Digraphs}
The map~$\Gamma$ provides the means of studying operational 0/1 polytopes via studying sets of digraphs.
Just as before, where we geometrically represent a~set of single-output correlations as a polytope~$\mathcal P\in\mathbb R^{2d}$,
we now geometrically represent a set~$\mathcal D$ of digraphs as a polytope~$\mathcal Q:=\conv(\bm\alpha(\mathcal D))\in\mathbb R^d$.
As we show in Ref.~\cite{tselentis2023}, in certain cases, the facet-defining inequalities of~$\mathcal Q$ can be lifted to represent facet-defining inequalities of~$\mathcal P$.
With this, this detour to sets of digraphs helps us to derive the facet-defining inequalities of~$\mathcal P$ by solving the same problem \emph{on half of the original dimension.}
\begin{lemma}[Lifting~\cite{tselentis2023}]
  \label{lemma:lifting}
  If an operational 0/1 polytope~$\mathcal P$ and its digraph polytope~$\mathcal Q:=\conv(\bm\alpha(\Gamma(\mathcal P)))$ are full-dimensional,
  and if~$(\bm w, c)$ is a non-negative and nontrivial facet-defining inequality of~$\mathcal Q$,
  then~$((\bm w, -\bm w), c)$ is a~facet-defining inequality of~$\mathcal P$.
\end{lemma}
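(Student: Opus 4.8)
The plan is to pass to the coordinates $\bm s:=\bm p^0+\bm p^1$ and $\bm t:=\bm p^0-\bm p^1$, an invertible linear image of $(\bm p^0,\bm p^1)$. In these coordinates the lifted inequality reads $\bm w\cdot\bm t\le c$, so it constrains only the ``signaling'' part $\bm t$ and leaves the ``offset'' part $\bm s$ free. At any $0/1$ extremal point $\bm p$ the entries of $\bm t$ lie in $\{-1,0,+1\}$ and satisfy $|\bm t|=\bm p^0\oplus\bm p^1=\bm\alpha([\bm p])$. Validity of the lifted inequality is then immediate on extremal points: since $\bm w\ge\bm 0$ we get $\bm w\cdot\bm t\le\bm w\cdot|\bm t|=\bm w\cdot\bm\alpha([\bm p])\le c$, the last step because $(\bm w,c)$ is valid for $\mathcal Q$ and $\bm\alpha([\bm p])\in\mathcal Q$. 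Convexity extends validity to all of $\mathcal P$. Note that this step is exactly where \emph{non-negativity} of $\bm w$ is used.

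For the facet claim I would show that the face $\mathcal F:=\mathcal P\cap\{\bm w\cdot\bm t=c\}$ has affine hull equal to the full hyperplane, i.e.\ that the only linear equality valid throughout $\mathcal F$ is a multiple of $\bm w\cdot\bm t=c$; together with full-dimensionality of $\mathcal P$ this yields $\dim\mathcal F=2d-1$. Since $(\bm w,c)$ is facet-defining for the full-dimensional $\mathcal Q$, fix $d$ affinely independent vertices $\bm\alpha_1,\dots,\bm\alpha_d$ of that facet, each a $0/1$ adjacency vector with $\bm w\cdot\bm\alpha_k=c$. Every such vertex lies in $\bm\alpha(\Gamma(\mathcal P))$, hence equals $\bm\alpha(D_k)$ for some digraph $D_k=[\bm p]$ with $\bm p$ extremal in $\mathcal P$; by operational closure (Eq.~\eqref{eq:opclosed}) all $2^d$ sign/offset variants of $\bm p$ are again extremal. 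I would pick the canonical variant with $t_i=+1$ on $\operatorname{supp}(\bm w)\cap\mathcal A(D_k)$, which places it on $\mathcal F$ (so $\mathcal F\neq\emptyset$), and then exploit the operational-closure moves that keep a point on $\mathcal F$: toggling $s_i\in\{0,2\}$ on any absent arc, and toggling $t_i\in\{\pm1\}$ on any present arc $i$ with $w_i=0$.

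Feeding these on-face points into a hypothetical equality $\bm a\cdot\bm s+\bm b\cdot\bm t=e$ should pin it down. The $s_i$-toggles force $a_i=0$ for every coordinate $i$ that is absent in at least one chosen vertex; the $t_i$-toggles force $b_i=0$ for every $i$ with $w_i=0$ that is present in at least one vertex. On the remaining support, evaluating the equality on the canonical representatives gives $\bm b\cdot\bm\alpha_k=e$ for all $k$, whence $\bm b$ is orthogonal to $\operatorname{span}\{\bm\alpha_k-\bm\alpha_1\}=\{\bm x\mid\bm w\cdot\bm x=0\}$, i.e.\ $\bm b=\lambda\bm w$ and $e=\lambda c$, as required. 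The crux is guaranteeing that the required toggles actually exist: if some coordinate $i$ were frozen ($\equiv 0$ or $\equiv 1$) across the whole facet $F_{\mathcal Q}$, then $\operatorname{aff}(F_{\mathcal Q})$ — which equals the supporting hyperplane $\{\bm w\cdot\bm x=c\}$ — would be contained in, hence equal to, the cube facet $\{x_i=0\}$ or $\{x_i=1\}$ of $[0,1]^d\supseteq\mathcal Q$, forcing $\bm w$ to be a multiple of $\bm 1_i$ and so contradicting \emph{nontriviality}. Establishing this ``no coordinate is frozen on the facet'' statement cleanly is the main obstacle; the remaining affine-independence bookkeeping is routine.
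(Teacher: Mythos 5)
Your proof is correct, but it proves facet-definingness by a genuinely different route than the paper. The paper's argument is \emph{primal}: it exhibits $2d$ affinely independent extremal points of $\mathcal P$ saturating the lifted inequality, namely the $d$ points $(\bm q,\bm 0)$ for $\bm q$ ranging over $d$ affinely independent saturating vertices $\mathcal T$ of $\mathcal Q$, plus $d$ ``diagonal'' points $(\bm q^{(i)}\oplus\bm 1_i,\bm 1_i)$ where $\bm q^{(i)}\in\mathcal T$ has a $0$ at position $i$. Your argument is \emph{dual}: you pin down the space of affine functionals vanishing on the face and show it is spanned by $\bm w\cdot\bm t-c$. Notably, in your $(\bm s,\bm t)$ coordinates the paper's diagonal points are exactly your $s_i$-toggled points (flipping $\bm p^0_i$ and $\bm p^1_i$ together on an absent arc turns $(s_i,t_i)=(0,0)$ into $(2,0)$), so the raw material overlaps; but your route additionally requires the $t_i$-toggles on present arcs with $w_i=0$ to kill the $\bm b$-components outside $\operatorname{supp}(\bm w)$, a step with no counterpart in the paper. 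Both proofs also hinge on the same combinatorial fact, and both use nontriviality exactly there: no coordinate may be ``frozen'' across the saturating vertices. The paper establishes this with a dimension count plus an explicit computation showing a frozen coordinate forces $\bm w=(c,\bm 0)$; your hyperplane-containment argument (the affine hull of the facet equals $\{\bm w\cdot\bm x=c\}$, so a frozen coordinate would make this a coordinate hyperplane and force $\bm w\propto\bm 1_i$) is cleaner, handles frozen-at-$0$ and frozen-at-$1$ uniformly, and — contrary to your closing worry — is already a complete resolution, not an obstacle: since facet points are convex combinations of its $0/1$ vertices, non-frozenness over the facet immediately yields vertices with the required entries, and your chosen $\bm\alpha_1,\dots,\bm\alpha_d$ affinely span the whole hyperplane, so the same argument applies to them directly. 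What each approach buys: the paper's is constructive and self-contained but requires affine-independence bookkeeping; yours makes the roles of non-negativity (validity) and nontriviality (no frozen coordinates) fully transparent and avoids that bookkeeping at the cost of introducing the coordinate change and orbit analysis.
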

For completeness and as a warm-up, we restate the proof here.
\begin{proof}
  The validity of~$((\bm w, -\bm w), c)$ for~$\mathcal P$ follows directly from the validity of~$(\bm w, c)$ for~$\mathcal Q$, i.e., for some~$\bm p\in\ext(\mathcal P)$,
  we have~$(\bm p^0 \oplus \bm p^1)\in\ext(\mathcal Q)$, and
  \begin{equation}
    (\bm w, -\bm w)\cdot \left(\bm p^0, \bm p^1\right)
    =
    \bm w\cdot\left(\bm p^0 - \bm p^1\right)
    \leq
    \bm w\cdot\left(\bm p^0 \oplus \bm p^1\right)
    \leq
    c
    \,.
  \end{equation}

  In order to show that~$((\bm w, -\bm w),c)$ is facet-defining for~$\mathcal P$, we generate a set of~$2d$ affinely independent extremal points of~$\mathcal P$ that saturate the inequality.
  Since~$(\bm w, c)$ is facet-defining for~$\mathcal Q$, there exists a set~$\mathcal T\subseteq\ext(\mathcal Q)$ of~$d$ affinely independent vectors that saturate the inequality, i.e., with
  \begin{equation}
    \forall \bm q \in\mathcal T:\bm w \cdot \bm q = c
    \,.
  \end{equation}
  These correspond to~$d$ affinely independent vectors~$\mathcal S_0 := \{(\bm q, \bm 0) \mid \bm q \in\mathcal T\}\subseteq\mathcal P$
  that saturate the \emph{lifted inequality:}
  \begin{equation}
    \forall \bm p \in \mathcal S_0:
    (\bm w, -\bm w)\cdot \bm p
    =
    c
    \,.
  \end{equation}
  The remaining~$d$ extremal points are obtained from taking elements in~$\mathcal S_0$ and by flipping certain bits, producing a \emph{diagonal} in the last~$d$ dimensions.
  First note that, if for some~$\bm p=(\bm q,\bm 0)\in\mathcal S_0$, the vector~$\bm q$ has a~$0$ entry at position~$i\in[d]$, then
  \begin{equation}
    (\bm w, -\bm w)
    \cdot
    (\bm p \oplus(\bm 1_{i}, \bm 1_{i}))
    =
    (\bm w, -\bm w)
    \cdot
    (\bm q \oplus \bm 1_i, \bm 1_i)
    =
    \bm w \cdot \bm q
    +
    w_i
    -
    w_i
    = c
    \,.
  \end{equation}
  The remaining~$d$ extremal points that saturate the lifted inequality are thus
  \begin{equation}
    \mathcal S_1
    :=
    \{
      (
      \bm q^{(i)}
      \oplus \bm 1_i,
      \bm 1_i)
      \mid
      i\in[d]
    \}
    \,,
  \end{equation}
  where~$\bm q^{(i)}\in\mathcal T$ has a~$0$ entry at position~$i$.
  The set~$\mathcal S_0 \cup \mathcal S_1$ describes~$2d$ affinely independent vectors that saturate the lifted inequality,
  unless there exists some~$i_0\in[d]$ for which we cannot find any~$\bm q\in\mathcal T$ with a~$0$ entry at position~$i_0$.
  But this latter situation cannot arise.
  Firstly, if there is a second index~$j_0\in[d]$ with~$\forall \bm q\in\mathcal T: \bm q_{j_0}=1$,
  then~$\mathcal T$ (which defines a~$(d-1)$-dimensional space) does not fit into the remaining~$d-2$ coordinates.
  Lastly, if there is no such second index, then~$\mathcal T$ occupies the whole remaining space.
  This, by assuming without loss of generality~$i_0=0$, means that
  \begin{equation}
    \forall \ell\in[d-1]: \bm w\cdot(1,\bm 1_\ell) = c
    \,,
  \end{equation}
  which in turn implies~$\bm w=(c,\bm 0)$, i.e., the inequality is trivial.
\end{proof}

In fact, the symmetry induced by the operational closure allows us to \emph{generate an exponential number} of facet-defining inequalities.
We will need this lemma for the second part of the present article.
\begin{lemma}[Facet generation]
  \label{lemma:generative}
  If~$((\bm w, -\bm w), c)$ is a facet-defining inequality of a~full-dimensional operational 0/1 polytope~$\mathcal P$,
  then~$((\bm\phi \bm w, -\bm\phi \bm w),c)$ is a~facet-defining inequality of~$\mathcal P$ for any~\mbox{$\bm\phi\in\{-1,+1\}^d$}, where the product is taken element-wise.
\end{lemma}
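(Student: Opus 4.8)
The plan is to realize the sign change~$\bm w\mapsto\bm\phi\bm w$ as the pull-back of an affine automorphism of~$\mathcal P$ built from the operational closure~\eqref{eq:opclosed}, mirroring the bit-flipping trick already used in the proof of Lemma~\ref{lemma:lifting}. Write~$I:=\{i\in[d]\mid\bm\phi_i=-1\}$ and~$\bm 1_I:=\sum_{i\in I}\bm 1_i$, and define
\begin{equation}
  T_I:\mathbb R^{2d}\to\mathbb R^{2d}
  \,,\quad
  (\bm p^0,\bm p^1)\mapsto(\bm p^0\oplus\bm 1_I,\bm p^1\oplus\bm 1_I)
  \,,
\end{equation}
where on real arguments ``$\oplus\bm 1_I$'' means~$x\mapsto 1-x$ at every coordinate in~$I$. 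This~$T_I$ is an affine involution. First I would check that it is an \emph{automorphism} of~$\mathcal P$: applying the implication~\eqref{eq:opclosed} once flips a single coordinate in both blocks while staying inside~$\ext(\mathcal P)$, so iterating it over the coordinates of~$I$ gives~$T_I(\ext(\mathcal P))\subseteq\ext(\mathcal P)$; since~$T_I$ is an involution this inclusion is an equality, and an affine bijection sending the extremal points onto themselves sends~$\conv(\ext(\mathcal P))=\mathcal P$ onto itself. This part is routine given the closure property.

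The second step is the key algebraic identity
\begin{equation}
  (\bm w,-\bm w)\cdot T_I(\bm p)
  =
  (\bm\phi\bm w,-\bm\phi\bm w)\cdot\bm p
  \qquad(\forall\,\bm p\in\mathbb R^{2d})
  \,.
  \label{eq:planident}
\end{equation}
I would verify this by direct expansion: flipping the coordinates in~$I$ turns~$\bm w\cdot(\bm p^0\oplus\bm 1_I)$ into~$\bm w\cdot\bm p^0$ plus a constant~$\sum_{i\in I}\bm w_i$ together with a sign change~$-2\bm w_i\bm p^0_i$ on~$I$, and the same happens for the~$\bm p^1$ block. Upon subtracting the two blocks the constants cancel, and what survives is exactly~$\sum_j\bm\phi_j\bm w_j(\bm p^0_j-\bm p^1_j)$, the right-hand side. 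I expect this cancellation of the constant term between the two blocks to be the one point deserving care: it is what makes the \emph{same} threshold~$c$ work on both sides, and it is special to the lifted shape~$(\bm w,-\bm w)$ rather than a generic inequality. Note that because the constants cancel, \eqref{eq:planident} holds as an identity of linear functionals on all of~$\mathbb R^{2d}$, so no appeal to full-dimensionality is needed at this step.

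With~\eqref{eq:planident} in hand the conclusion is immediate. Validity transfers because for any~$\bm p\in\mathcal P$ we have~$T_I(\bm p)\in\mathcal P$, hence~$(\bm\phi\bm w,-\bm\phi\bm w)\cdot\bm p=(\bm w,-\bm w)\cdot T_I(\bm p)\le c$. For the facet property, let~$F:=\mathcal P\cap\{\bm p\mid(\bm w,-\bm w)\cdot\bm p=c\}$, which has dimension~$2d-1$ by hypothesis. Since~$T_I$ is an affine bijection of~$\mathcal P$ with~$T_I^{-1}=T_I$, its image is
\begin{equation}
  T_I(F)=\mathcal P\cap\{\bm p\mid(\bm\phi\bm w,-\bm\phi\bm w)\cdot\bm p=c\}
  \,,
\end{equation}
using~\eqref{eq:planident} to rewrite the defining equation. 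Affine bijections preserve affine dimension, so this face again has dimension~$2d-1$. Therefore~$((\bm\phi\bm w,-\bm\phi\bm w),c)$ is a facet-defining inequality of~$\mathcal P$, as claimed.
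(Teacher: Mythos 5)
Your proof is correct and takes essentially the same route as the paper's: the identical key identity (sign flips on~$\bm w$ are equivalent to coordinate flips on~$\bm p$ across both blocks) combined with the operational closure~\eqref{eq:opclosed}. The only difference is presentational --- you package the flips as an explicit affine involution~$T_I$ and transfer the facet dimension via the image of the face under that automorphism, whereas the paper reduces to single-coordinate flips, iterates, and directly asserts the existence of~$2d$ affinely independent saturating points (implicitly, the flipped images of the original ones).
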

\begin{proof}
  It is sufficient to consider~$\bm\phi$ with a single~$-1$ at coordinate~$i_0\in[d]$.
  The full statement follows from the iterative application of this case.
  This case holds because flipping the signs according to~$\bm \phi$ is tantamount
  to flipping the corresponding entries in the extremal point~$\bm p=(\bm p^0,\bm p^1)$:
  \begin{align}
    (\bm \phi \bm w,-\bm \phi \bm w) \cdot \bm p
    &=
    \bm\phi\bm w\cdot\left(\bm p^0-\bm p^1\right)
    \\
    &=
    \sum_{i\neq i_0}
    \bm w_i
    \left(
      \bm p^0_i
      -
      \bm p^1_i
    \right)
    -
    \bm w_{i_0}
    \left(
      \bm p^0_{i_0}
      -
      \bm p^1_{i_0}
    \right)
    \\
    &=
    \sum_{i\neq i_0}
    \bm w_i
    \left(
      \bm p^0_i
      -
      \bm p^1_i
    \right)
    +
    \bm w_{i_0}
    \left(
      \bm p^1_{i_0}
      -
      \bm p^0_{i_0}
    \right)
    \\
    &=
    (\bm w,-\bm w)\cdot
    \left(
      \bm p^0\oplus \bm 1_{i_0}
      ,
      \bm p^1\oplus \bm 1_{i_0}
    \right)
    \,.
  \end{align}
  Thus, the ``rotated'' inequality~$((\bm \phi\bm w,-\bm\phi\bm w),c)$ is valid for~$\mathcal P$, and there are~$2d$ affinely independent vectors in~$\mathcal P$ that saturate it.
\end{proof}

\section{Graphical Games}
\label{sec:graphicalgames}
Before we consider specific causal constraints, we discuss graphical games in general.
Take some digraph polytope~$\mathcal Q\subseteq \mathbb R^d$, and suppose that~$(\bm w, c)$ is an inequality valid for~$\mathcal Q$.
This inequality states that
\begin{equation}
  \bm w \cdot \bm r
  =
  \sum_{(i,j)\in[n]^2_{\neq}}
  w_{i,j}
  r_{i,j}
  \leq
  c
\end{equation}
for all vectors~$\bm r\in\mathcal Q$.
The form of these vectors allow for a more graph-centric reading of the inequality:
An entry at coordinate~$(i,j)$ specifies a weight of the arc~$i\arc j$ over the vertices~$[n]$.
In particular, when we have~$\bm w\in\{0,1\}^d$, then the inequality is
\begin{equation}
  \sum_{i\arc j\in\mathcal A(\bm w)}
  r_{i,j}
  \leq
  c
  \,;
\end{equation}
the sum of the entries in~$\bm r$ that correspond to arcs in~$\bm w$ is upper bounded by~$c$ (see Fig.~\ref{fig:gg}).
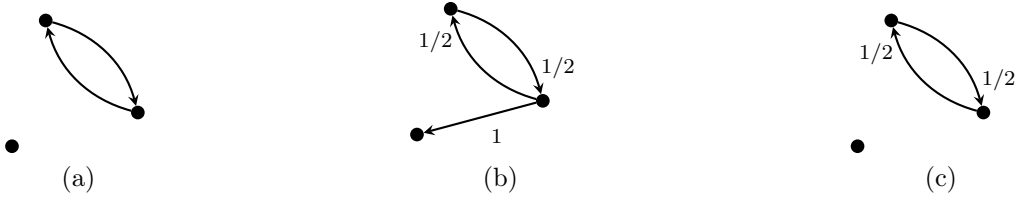
\begin{figure}
  \centering
  \def\adelta{45}
  \subfloat[\label{subfig:gtex}]{%
    \begin{tikzpicture}
      \pgfmathsetmacro{\angledelta}{360/3}
      \foreach \x in {1, 2, 3} {
        \pgfmathsetmacro{\angle}{(\x-1) * \angledelta + \angledelta/2 + \adelta}
        \fill (\angle:\radiusmain) circle (\vertexradius);
      }
      \pgfmathsetmacro{\anglea}{(1-1) * \angledelta + \angledelta/2 + \adelta}
      \pgfmathsetmacro{\angleb}{(3-1) * \angledelta + \angledelta/2 + \adelta}
      \draw[arc] (\anglea:\radiusmain) to[bend left] (\angleb:\radiusmain);
      \draw[arc] (\angleb:\radiusmain) to[bend left] (\anglea:\radiusmain);
    \end{tikzpicture}
  }
  \hfill
  \subfloat[\label{subfig:targetex}]{%
    \begin{tikzpicture}
      \pgfmathsetmacro{\angledelta}{360/3}
      \foreach \x in {1, 2, 3} {
        \pgfmathsetmacro{\angle}{(\x-1) * \angledelta + \angledelta/2 + \adelta}
        \fill (\angle:\radiusmain) circle (\vertexradius);
      }
      \pgfmathsetmacro{\anglea}{(1-1) * \angledelta + \angledelta/2 + \adelta}
      \pgfmathsetmacro{\angleb}{(3-1) * \angledelta + \angledelta/2 + \adelta}
      \pgfmathsetmacro{\anglec}{(2-1) * \angledelta + \angledelta/2 + \adelta}
      \draw[arc] (\anglea:\radiusmain) to[bend left] node[right,near end] {\scriptsize $1/2$} (\angleb:\radiusmain);
      \draw[arc] (\angleb:\radiusmain) to[bend left] node[left,near end] {\scriptsize $1/2$} (\anglea:\radiusmain);
      \draw[arc] (\angleb:\radiusmain) -- node[below right,midway] {\scriptsize $1$} (\anglec:\radiusmain);
    \end{tikzpicture}
  }
  \hfill
  \subfloat[\label{subfig:prodex}]{%
    \begin{tikzpicture}
      \pgfmathsetmacro{\angledelta}{360/3}
      \foreach \x in {1, 2, 3} {
        \pgfmathsetmacro{\angle}{(\x-1) * \angledelta + \angledelta/2 + \adelta}
        \fill (\angle:\radiusmain) circle (\vertexradius);
      }
      \pgfmathsetmacro{\anglea}{(1-1) * \angledelta + \angledelta/2 + \adelta}
      \pgfmathsetmacro{\angleb}{(3-1) * \angledelta + \angledelta/2 + \adelta}
      \pgfmathsetmacro{\anglec}{(2-1) * \angledelta + \angledelta/2 + 90}
      \draw[arc] (\anglea:\radiusmain) to[bend left] node[right,near end] {\scriptsize $1/2$} (\angleb:\radiusmain);
      \draw[arc] (\angleb:\radiusmain) to[bend left] node[left,near end] {\scriptsize $1/2$} (\anglea:\radiusmain);
    \end{tikzpicture}
  }
  \caption{%
    (a) Example of a graphical game~$\bm w$ as a $2$-cycle on three vertices.
    Here, and in the other digraphs, we omit the vertex labels.
    (b) A target digraph~$\bm r$ with weights.
    (c)~The ``inner product'' amounts to the value~$1$.
  }
  \label{fig:gg}
\end{figure}

According to Lemma~\ref{lemma:lifting}, the \emph{lifted inequality} of~$(\bm w, c)$ is~$((\bm w, -\bm w), c)$.
So, evaluating this inequality on a conditional probability distribution~$P_{A|S,R,X}$ amounts to evaluating the inequality
\begin{equation}
  (\bm w, -\bm w)\cdot(\bm p^0, \bm p^1) \leq c
  \,,
\end{equation}
where we use the geometric representation~$\bm p=(\bm p^0,\bm p^1)$ of~$P_{A|S,R,X}$.
By returning to the probability notation, and by taking into account that~$\bm w$ itself is a digraph, the left-hand side of this inequality is
\begin{align}
  \sum_{i\arc j\in\mathcal A(\bm w)}
  &
  p(0 | i,j,0)
  -
  p(0 | i,j,1)
  \\
  &=
  \sum_{i\arc j\in\mathcal A(\bm w)}
  p(0 | i,j,0)
  -
  (1-p(1 | i,j,1))
  \\
  &=
  \sum_{\substack{i\arc j\in\mathcal A(\bm w)\\x\in\{0,1\}}}
  p(x | i,j,x)
  -
  |\mathcal A(\bm w)|
  \,.
\end{align}
This allows us to rewrite the inequality conveniently as
\begin{equation}
  \Pr[A=X] = \frac{1}{2|\mathcal A(\bm w)|}
  \sum_{\substack{i\arc j\in\mathcal A(\bm w)\\x\in\{0,1\}}}
  p(x | i,j,x)
  \leq
  \frac{1}{2}
  +
  \frac{c}{2|\mathcal A(\bm w)|}
  \,.
\end{equation}
This inequality bounds the probability that the receiver outputs the uniformly distributed bit given to the sender,
where the sender-receiver pair is chosen uniformly at random from the digraph~$\bm w$.

With this, we are in position to define \emph{graphical games} and \emph{graphical tests:}
\begin{definition}[Graphical game, graphical test]
  \label{def:gggt}
  A \emph{graphical game} for~$n$ parties~$[n]$ is a digraph~$G$ over~$\mathcal V(G)=[n]$.
  The game is played as follows.
  A~referee picks uniformly at random an arc~$s\arc r$ from~$\mathcal A(G)$ and a bit~$x$.
  The referee announces the arc~$s\arc r$ to all parties, and announces~$x$ to the sender~$s$.
  The parties \emph{win the game~$G$,} which we express by~$\win(G)$, whenever the receiver~$r$ outputs~$x$.
  A \emph{graphical test}~$\graphicaltest(G,t)$ is a graphical game~$G$ with a bound~$t$.
  The parties \emph{pass the test~$\graphicaltest(G,t)$} whenever they saturate the inequality~$\Pr[\win(G)] \leq t$.
  Otherwise, they \emph{fail the test.}
\end{definition}

In what follows, we describe the graphical games with which certain causal constraints can be tested.
The recipe is always the same:
\begin{enumerate}
  \item Show that the correlations of interest form an operational 0/1 polytope~$\mathcal P$.
  \item Derive the corresponding polytope of digraphs~$\mathcal Q$ using the map~$\Gamma$.
  \item Compute the facet-defining inequalities of~$\mathcal Q$, and identify the graphical games.
\end{enumerate}
In all three cases, the first step is immediate.
It follows directly from the definition and the discussion on operational 0/1 polytopes above.
Notably, the \emph{style} in which the digraphs are defined in each of the three cases differs.
For static causal order, we make use the \emph{graph isomorphisms} to define the games,
for definite causal order, then again, we \emph{parametrically} define all identified games,
and finally, for bi-causal order, we abstractly define the relevant \emph{set of digraphs.}
We believe that this mirrors the computational difficulty in deciding membership in the respective cases.

\subsection{Static Causal Order}
The causal constraint of \emph{static causal order} requires that all communication follows a partial order.
This corresponds to \emph{non-adaptive} strategies; the flow of information is predefined (although, it may be probabilistic).
\begin{definition}[Static causal order,~$\mathcal C_n^\text{static}$]
  \label{def:static}
  The~$n$-party correlations~$P_{A|S,R,X}$ agree with \emph{static causal order} if and only if
  their probabilities decompose as
  \begin{equation}
    p(a|s,r,x)
    =
    \sum_{\sigma: s \preceq_\sigma r}
    p(\sigma)
    p^{\preceq}_\sigma(a|s,r,x)
    +
    \sum_{\sigma: s \not\preceq_\sigma r}
    p(\sigma)
    p^{\not\preceq}_\sigma(a|s,r)
    \,,
  \end{equation}
  where~$\preceq_\sigma$ is a partial order over~$[n]$.
  The set of such correlations is~$\mathcal C_n^\text{static}$.
\end{definition}

\subsubsection{DAG Polytope}
For any~$n$, the set~$\mathcal C_n^\text{static}$ forms an operational 0/1 polytope~$\mathcal P_n^\text{static}$.
As shown in Ref.~\cite{tselentis2023}, this polytope is characterized by directed acyclic graphs:
\begin{theorem}[DAGs~\cite{tselentis2023}]
  The set of digraphs~$\Gamma(\mathcal P_n^\text{static})$ is the set of all directed acyclic graphs on~$n$ vertices: $\Gamma(\mathcal P_n^\text{static})  = \DAG_n$.
\end{theorem}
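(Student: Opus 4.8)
The plan is to establish the two inclusions $\Gamma(\mathcal P_n^\text{static}) \subseteq \DAG_n$ and $\DAG_n \subseteq \Gamma(\mathcal P_n^\text{static})$ separately. Since $\mathcal P_n^\text{static}$ is an operational 0/1 polytope, every element of $\Gamma(\mathcal P_n^\text{static})$ is, by the definition of $\Gamma$, the signaling digraph $[\bm p]$ of some deterministic extremal point $\bm p \in \ext(\mathcal P_n^\text{static}) \subseteq \{0,1\}^{2d}$, and conversely it suffices to exhibit, for each target digraph, one extremal point realizing it. I would use this characterization throughout.

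For the forward inclusion I would start from an extremal $\bm p$ and unfold the static decomposition of Definition~\ref{def:static}. Grouping the two sums by $\sigma$, one writes $\bm p = \sum_\sigma p(\sigma)\,\bm q_\sigma$, where each $\bm q_\sigma$ is the behavior that uses the single partial order $\preceq_\sigma$ and lies in $\mathcal P_n^\text{static}$. Because $\bm p$ is a $0/1$ vector expressed as a convex combination of points of $[0,1]^{2d}$, each coordinate being extremal forces every term of positive weight to agree with $\bm p$ on that coordinate; hence $\bm q_\sigma = \bm p$ for every $\sigma$ with $p(\sigma)>0$, and the mixture collapses to a single order $\preceq := \preceq_\sigma$. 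For this order, every pair $(s,r)$ with $s \not\preceq r$ is governed by a response function $p^{\not\preceq}_\sigma(a \mid s,r)$ independent of $x$, so $p(0\mid s,r,0)=p(0\mid s,r,1)$ and no arc $s\arc r$ appears in $[\bm p]$. Thus $\mathcal A([\bm p]) \subseteq \{i\arc j \mid i \prec j\}$, the strict comparability digraph of $\preceq$; a directed cycle would give $v_0 \prec v_1 \prec \dots \prec v_0$, contradicting antisymmetry, so $[\bm p]$ is acyclic and lies in $\DAG_n$.

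For the reverse inclusion I would, given an arbitrary $D \in \DAG_n$, construct an explicit extremal point whose signaling digraph is exactly $D$. Acyclicity of $D$ makes its reachability relation $j \in \suc_D(i)$ a partial order $\preceq$ on $[n]$. I define a deterministic behavior in which the receiver copies the sender's bit along arcs of $D$ and outputs a constant otherwise: $p(0\mid s,r,0)=1,\ p(0\mid s,r,1)=0$ when $s\arc r \in \mathcal A(D)$, and $p(0\mid s,r,0)=p(0\mid s,r,1)=1$ when $s\arc r \notin \mathcal A(D)$. Every signaling pair satisfies $s \prec r$ by construction, so this behavior fits the decomposition of Definition~\ref{def:static} with the single order $\preceq$ and hence lies in $\mathcal P_n^\text{static}$; being a $0/1$ vector it is extremal. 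Since $(\bm p^0 \oplus \bm p^1)_{i,j} = 1$ holds exactly when $i \arc j \in \mathcal A(D)$, we get $[\bm p] = D$, so $D \in \Gamma(\mathcal P_n^\text{static})$.

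The step I expect to be the main obstacle is the collapse argument in the forward direction: one must argue cleanly that a deterministic point of the probabilistic decomposition is governed by a \emph{single} partial order rather than a genuine average of several, since it is precisely here that the extreme-point structure of the $0/1$ polytope converts the soft decomposition into a rigid order constraint. Once this is secured, acyclicity follows immediately from antisymmetry, and the reverse inclusion reduces to the routine verification that the explicit copy-or-constant witness satisfies the constraint and produces the prescribed signaling pattern.
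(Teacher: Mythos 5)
Your proof is correct and takes essentially the same route as the paper, which only sketches the argument: the $\subseteq$-inclusion from the underlying partial order of communication (your collapse-to-a-single-order argument is precisely the detail behind this), and the $\supseteq$-inclusion from the fact that the reachability relation of a DAG is a partial order, witnessed by your copy-or-constant deterministic behavior. You have simply filled in the details that the paper defers to Ref.~\cite{tselentis2023}.
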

The proof (see, Ref.~\cite{tselentis2023}) is straightforward:
The~$\subseteq$-inclusion follows from the underlying partial order of communication,
and the~$\supseteq$-inclusion follows from the fact that the transitive closure of any directed acyclic graph is a~partial order.

As by our recipe, we now study the polytope of directed acyclic graphs.
\begin{definition}[DAG polytope,~$\mathcal Q_n^\text{DAG}$]
  The \emph{polytope of directed acyclic graphs} (DAG polytope, for short) is~$\mathcal Q_n^\text{DAG} := \conv(\bm \alpha (\DAG_n))$.
\end{definition}

This polytope of directed acyclic graphs has been extensively studied by Grötschel, Jünger, and Reinelt~\cite{grotschel1985} in the mid 80s.
They studied these graphs in the context of the feedback-arc-set problem --- a member of the famous list of $\cc{NP-complete}$ problems by Karp~\cite{karp1972}.
This problem asks whether it is possible to remove~$k$ or fewer arcs from a digraph, such that the resulting digraph is acyclic.
The polytope approach~\cite{grotschel1985} tackles this problem from a~combinatorial-optimization perspective.

\subsubsection{Cycle, Fence, and Möbius Games}
Grötschel, Jünger, and Reinelt~\cite{grotschel1985} identified various classes of facet-defining inequalities of that polytope.
Their result also suggests that for higher dimensions, novel families of inequalities may arise; a complete characterization seems intractable.
All inequalities correspond to \emph{graphical tests.}
Here, we present the classes of~\emph{cycle inequalities,} \emph{fence inequalities,} and \emph{Möbius inequalities.}
\begin{definition}[$k$-cycle,~$k$-fence,~$k$-Möbius digraphs]
  The~\emph{$k$-cycle digraph}~$C_k^{n}$ with~$n\geq k\geq 2$,
  the~\emph{$k$-fence digraph}~$F_k^{n}$ with~$n\geq 2k\geq 4$,
  and the~\emph{$k$-Möbius digraph}~$M_k^{n}$ with~$n\geq 2k\geq 6$ and~$k$ odd, are
  \begin{align}
    \mathcal V\left(C_k^{n}\right)
    &:=
    \mathcal V\left(F_k^{n}\right)
    :=
    \mathcal V\left(M_k^{n}\right)
    :=
    [n]
    \,,
    \\
    \mathcal A\left(C_k^{n}\right)
    &:=
    \{i\arc (i+1)_k\mid i\in[k]\}
    \,,
    \\
    \mathcal A\left(F_k^{n}\right)
    &:=
    \{ 2i \arc (2i+1) \mid i\in[k]\}\cup\{ (2i+1) \arc 2j\mid (i,j)\in[n]^2_{\neq}\}
    \,,
    \\
    \mathcal A\left(M_k^{n}\right)
    &:=
    \{ i \arc (i+1)_{2k} \mid i\in [2k] \} \,\cup
    \{ (2i+3)_{2k} \arc 2i \mid i\in[k]\}
    \,,
  \end{align}
  where we use~$(x)_m$ as the short-hand notation for the smallest non-negative number that is congruent to~$x \bmod m$.
\end{definition}
The~$k$-cycle digraph is simply a cycle of length~$k$ embedded in a digraph of order~$n$ (see Fig.~\ref{subfig:gtex} for~$C_2^3$).
The~$k$-fence digraph, then again, contains the bipartite subdigraph of order~$2k$ with ``downward posts'' and ``upward rails'' (see Fig.~\ref{subfig:kfence}).
Note that the~$2$-fence digraph is the same as the~$4$-cycle digraph of the same order.
\begin{figure}
  \centering
  \def\n{15}
  \def\kfence{6}
  \def\kmobius{5}
  \def\height{2}
  \def\wfact{2}
  \subfloat[\label{subfig:kfence}]{%
    \begin{tikzpicture}
      \foreach \i in {1,...,\kfence} {
        \pgfmathsetmacro{\ulab}{int(2*(\i-1))}
        \pgfmathsetmacro{\llab}{int((2*(\i-1)+1)}
        \pgfmathsetmacro{\xco}{\i*\wfact}
        \fill (\xco,\height) circle (\vertexradius) node[above right] {\scriptsize $\ulab$};
        \fill (\xco,0) circle (\vertexradius) node[below left] {\scriptsize $\llab$};
        \draw[arc] (\xco,\height) -- (\xco,0);
        \foreach \j in {1,...,\kfence} {
          \pgfmathsetmacro{\xxco}{\j*\wfact}
          \ifnum\i>\j
            \draw[arc] (\xco,0) -- (\xxco,\height);
            \else\ifnum\i<\j
              \draw[arc] (\xco,0) -- (\xxco,\height);
            \fi \fi
          }
        }
        \pgfmathsetmacro{\rem}{int(\n-2*\kfence)}
        \pgfmathsetmacro{\xco}{(\kfence+1/2)*\wfact}
        \pgfmathsetmacro{\hdelta}{\height/(\rem-1)}
        \foreach \i in {1,...,\rem} {
          \pgfmathsetmacro{\yco}{\height-(\i-1)*\hdelta}
          \pgfmathsetmacro{\lab}{int(2*\kfence+-1+\i)}
          \fill (\xco,\yco) circle (\vertexradius) node [right] {\scriptsize $\lab$};
        }
      \end{tikzpicture}
    }

  \subfloat[\label{subfig:kmobius}]{%
    \begin{tikzpicture}
      \pgfmathsetmacro{\tot}{2*\kmobius}
      \foreach \i in {1,...,\tot} {
        \pgfmathsetmacro{\lab}{int(\i-1)}
        \pgfmathsetmacro{\xco}{int((\i-1)/2)*\wfact}
        \pgfmathsetmacro{\istop}{int(mod(int((\i)/2)+1,2))}
        \pgfmathsetmacro{\odd}{int(mod(\i-1,2))}
        \pgfmathsetmacro{\yco}{\istop*\height}
        \fill (\xco,\yco) circle (\vertexradius) node[above right] {\scriptsize $\lab$};
        \ifnum\i<\tot
          \ifnum\odd=1
            \draw[arc] (\xco,\yco) -- ++(\wfact,0);
            \ifnum\i=2
              \pgfmathsetmacro{\xcoalt}{int(\kmobius-1)*\wfact}
              \draw[arc,->] (\xcoalt,\yco) -- ++(-\wfact,0);
            \else
              \draw[arc,->] (\xco,\yco) -- ++(-\wfact,0);
            \fi
          \else
            \pgfmathsetmacro{\hdelta}{(-1)^\istop*\height}
            \draw[arc] (\xco,\yco) -- ++(0,\hdelta);
          \fi
        \fi
      }
      \pgfmathsetmacro{\xdel}{(\kmobius-1/2)*\wfact}
      \pgfmathsetmacro{\xfin}{(\kmobius-1)*\wfact}
      \draw[arc,rounded corners] (0,0) -- ++(0, -\height*1/5) -- ++(\xdel,0) |- (\xfin,\height);
      \draw[arc,rounded corners] (\xfin,0) -- ++(0, -\height*2/5) -- ++(-\xdel,0) |- (0,\height);
      \pgfmathsetmacro{\rem}{int(\n-2*\kmobius)}
      \pgfmathsetmacro{\xco}{\kmobius*\wfact}
      \pgfmathsetmacro{\hdelta}{\height/(\rem-1)}
      \foreach \i in {1,...,\rem} {
        \pgfmathsetmacro{\yco}{\height-(\i-1)*\hdelta}
        \pgfmathsetmacro{\lab}{int(2*\kmobius+-1+\i)}
        \fill (\xco,\yco) circle (\vertexradius) node[right] {\scriptsize $\lab$};
      }
    \end{tikzpicture}
  }
  \caption{Two classes of digraphs that describe graphical games for~$n=\n$ parties:
    (a)~the~$\kfence$-fence digraph of order~$\n$, and
    (b)~the~$\kmobius$-Möbius digraph of order~$\n$.
  }
	\label{fig:kfence}
\end{figure}
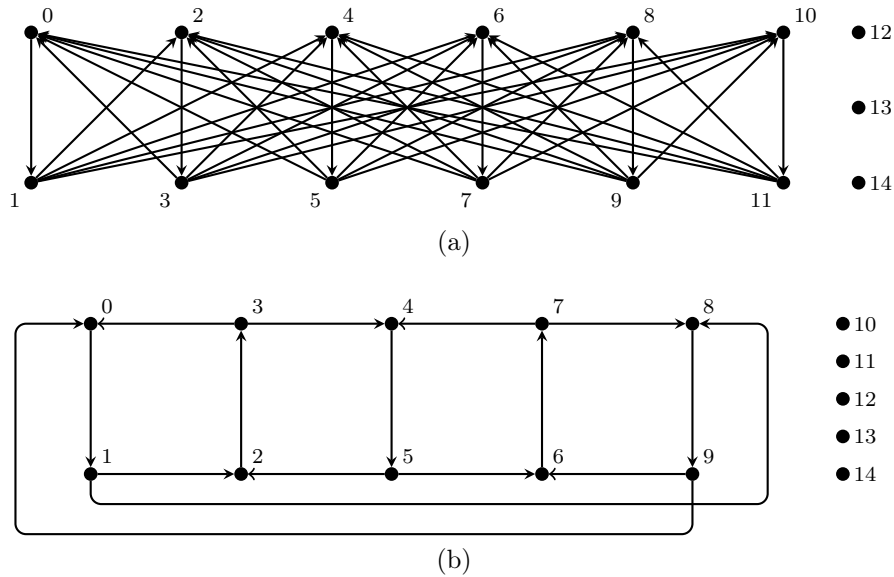
Finally, the~$k$-Möbius digraph is an orientation of the ladder graph of height~$k$, such that the internal faces form~$4$-cycles, and has the added arcs that cyclically closes the ladder (see Fig.~\ref{subfig:kmobius} and Fig.~\ref{subfig:moebius}).
The~$3$-Möbius digraph and the~$3$-fence digraph are the same.

\begin{theorem}[DAG facets~\cite{grotschel1985}]
  \label{thm:dagfacets}
  The following inequalities are facet-defining for the DAG polytope~$\mathcal Q_n^\text{DAG}$:
  \begin{align}
    & (- \bm 1_{i,j}, 0) & \text{(for any $(i,j)\in[n]^2_{\neq}$; trivial inequalities)}
    \,,
    \\
    & (\bm \alpha(G), k-1) & \text{(for any $G \cong C_k^n$; $k$-cycle inequalities)}
    \,,
    \\
    & (\bm \alpha(G), k^2-k+1) & \text{(for any $G \cong F_k^n$; $k$-fence inequalities)}
    \,,
    \\
    & (\bm \alpha(G), (5k-1)/2) & \text{(for any $G \cong M_k^n$; $k$-Möbius inequalities)}
    \,.
  \end{align}
\end{theorem}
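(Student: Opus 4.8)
The plan is to handle all four families by the standard recipe for facets of a full-dimensional $0/1$ polytope: verify \emph{validity} over $\DAG_n$, and then establish the \emph{facet} property. Full-dimensionality of $\mathcal Q_n^\text{DAG}$ comes first, and it is immediate: the empty digraph together with the $d=n(n-1)$ single-arc digraphs (each trivially acyclic) are $d+1$ affinely independent members of $\bm\alpha(\DAG_n)$, so $\dim(\mathcal Q_n^\text{DAG})=d$. The same points settle the trivial inequalities at once: for a fixed arc $(i,j)$, the empty digraph and the $d-1$ single-arc digraphs $\bm 1_a$ with $a\neq(i,j)$ are $d$ affinely independent acyclic digraphs all lying on the hyperplane $x_{i,j}=0$, so $(-\bm 1_{i,j},0)$ is facet-defining.

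For validity, all three structured families reduce to one combinatorial question: the maximum number of arcs of the given subdigraph that can appear in an acyclic digraph. For the cycle $C_k^n$ this is trivial, since an acyclic digraph cannot contain all $k$ arcs of a directed $k$-cycle, giving at most $k-1$. For $F_k^n$ and $M_k^n$ the bounds $k^2-k+1$ and $(5k-1)/2$ are genuine maximum-acyclic-subgraph values for these specific graphs; I would prove each by exhibiting one acyclic subdigraph that attains the value (the lower bound) and then arguing, via the topological order induced by any acyclic orientation and the source/sink structure it forces on the downward posts and upward rails (respectively on the rungs of the Möbius ladder), that no acyclic subdigraph can do better (the upper bound).

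For the facet property I would use the indirect method, which sidesteps an explicit count of affinely independent points. Writing the candidate as $(\bm\alpha(G),c)$, I assume an inequality $(\bm w,c')$ is satisfied with equality by \emph{every} $\bm\alpha(D)$, $D\in\DAG_n$, that saturates $(\bm\alpha(G),c)$, and aim to deduce $\bm w=\lambda\,\bm\alpha(G)$ and $c'=\lambda c$ for some $\lambda>0$, which since $\mathcal Q_n^\text{DAG}$ is full-dimensional proves the facet property. Two ingredients are needed: (i) for each arc $a\notin\mathcal A(G)$, two saturating acyclic digraphs differing only in the presence of $a$, forcing $w_a=0$; and (ii) for the arcs inside $G$, saturating digraphs related by local swaps forcing all their coefficients equal. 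For $C_k^n$ this is clean. A saturating digraph is obtained by deleting exactly one cycle arc, leaving a directed path; for any external or chord arc $a$ one can choose the deleted cycle arc so that $a$ may be toggled without creating a cycle, giving (i), and deleting cycle arc $e_1$ versus $e_2$ yields two paths whose adjacency vectors differ by $\bm 1_{e_2}-\bm 1_{e_1}$, forcing $w_{e_1}=w_{e_2}$ across all pairs, giving (ii).

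The hard part is step (ii) for the fence and Möbius families. There the saturating acyclic digraphs are rigid: flipping a single rail or rung typically destroys either acyclicity or tightness, so one cannot freely toggle arcs, and the families of saturating digraphs that exhibit the required coefficient relations must be designed deliberately from carefully chosen linear extensions of the fence and Möbius structures, with each local modification tracked to confirm it preserves both acyclicity and the tight count. This is precisely the technical core of Grötschel, Jünger, and Reinelt~\cite{grotschel1985}, and rather than reprove these classical facet results in full, I would follow their constructions and verify that the induced difference vectors span the orthogonal complement of $\bm\alpha(G)$.
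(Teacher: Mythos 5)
The paper contains no proof of this theorem: it is imported wholesale from Grötschel, Jünger, and Reinelt~\cite{grotschel1985} (hence the citation in the theorem header), and the surrounding text only describes their results, so there is no in-paper argument to compare yours against. Judged on its own, your proposal is correct as far as it goes, and it goes further than the paper does. Full-dimensionality via the empty digraph plus the $n(n-1)$ single-arc digraphs, the trivial facets $(-\bm 1_{i,j},0)$, cycle validity, and the indirect-method facet argument for the $k$-cycle (toggling arcs off the cycle to force $w_a=0$, and comparing the two paths obtained by deleting cycle arcs $e_1$ versus $e_2$ to force $w_{e_1}=w_{e_2}$) are all sound; note that this indirect style differs from the direct method the paper itself uses for its analogous new results (Theorems~\ref{thm:sdfacets} and~\ref{thm:notstrongfacets}), where facets are certified by explicitly exhibiting $d$ affinely independent saturating points. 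For the fence and Möbius families you correctly identify that validity is a maximum-acyclic-subdigraph computation and that the facet property is the genuinely hard, rigid part, and you defer both to~\cite{grotschel1985} --- which is exactly what the paper does for the entire theorem. The only caveat is that, as a self-contained proof, your proposal is incomplete precisely there: the bounds $k^2-k+1$ and $(5k-1)/2$ and the corresponding facet arguments are sketched, not established. But since the paper's own treatment is a bare citation, your partial reconstruction plus deferral is, if anything, the more detailed of the two.
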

Thanks to the lifting lemma (Lemma~\ref{lemma:lifting}), we thus obtain the following graphical tests:
\begin{corollary}[Static causal order, see Ref.~\cite{tselentis2023}]
  \label{cor:cyclefencemobius}
  The following inequalities describe~$n$-party graphical tests for static causal order:
  \begin{align}
    & \graphicaltest\left(G, 1-\frac{1}{2k}\right) & \text{(for any $G \cong C_k^n$; $k$-cycle test)}
    \,,
    \\
    & \graphicaltest\left(G, 1-\frac{k-1}{2k^2}\right) & \text{(for any $G \cong F_k^n$; $k$-fence test)}
    \,,
    \\
    & \graphicaltest\left(G, 1-\frac{k+1}{12k}\right) & \text{(for any $G \cong M_k^n$; $k$-Möbius test)}
    \,.
  \end{align}
  In each case, the graphical test is facet-defining for~$\mathcal P_n^\text{static}$.
\end{corollary}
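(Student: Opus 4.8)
The plan is to obtain the three tests as direct translations of the facet-defining inequalities of the DAG polytope listed in Theorem~\ref{thm:dagfacets}, pushed through the lifting lemma and then rewritten in win-probability form; after the cited results, the corollary is essentially bookkeeping. First I would verify the hypotheses required by Lemma~\ref{lemma:lifting}. Since $\mathcal C_n^\text{static}$ forms an operational 0/1 polytope $\mathcal P_n^\text{static}$ with $\Gamma(\mathcal P_n^\text{static}) = \DAG_n$, its digraph polytope is exactly $\mathcal Q_n^\text{DAG} = \conv(\bm\alpha(\DAG_n))$. Both polytopes are full-dimensional: $\mathcal Q_n^\text{DAG}$ because $\DAG_n$ contains $d+1$ affinely independent adjacency vectors spanning $\mathbb R^d$ (implicit in the facet result of~\cite{grotschel1985}, whose facet notion presupposes full dimension), and $\mathcal P_n^\text{static}$ likewise. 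Each of the cycle, fence, and Möbius inequalities has coefficient vector $\bm\alpha(G)\in\{0,1\}^d$, hence is non-negative, and has at least two nonzero entries throughout the stated parameter ranges ($k\geq 2$), hence is nontrivial. Note the trivial inequalities $(-\bm 1_{i,j},0)$ are correctly excluded: they are neither non-negative nor nontrivial, so they are not lifted to tests.

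With the hypotheses in place, Lemma~\ref{lemma:lifting} yields at once that $((\bm\alpha(G),-\bm\alpha(G)),c)$ is facet-defining for $\mathcal P_n^\text{static}$ for each of the three families, with $c$ the respective right-hand side from Theorem~\ref{thm:dagfacets}; this establishes the final sentence of the corollary. It remains to convert each lifted inequality into the form $\Pr[\win(G)]\leq t$. Here I would invoke the general identity derived above in the Graphical Games section: for a $0/1$ weight vector $\bm w=\bm\alpha(G)$, the inequality $(\bm w,-\bm w)\cdot\bm p\leq c$ is equivalent to
\begin{equation}
  \Pr[\win(G)]
  =
  \Pr[A=X]
  \leq
  \frac{1}{2}
  +
  \frac{c}{2\lvert\mathcal A(G)\rvert}
  \,.
\end{equation}
Thus the bound $t$ is pinned down the moment I know $c$ and the arc count $\lvert\mathcal A(G)\rvert$.

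The remaining step is to count arcs and substitute. The cycle $C_k^n$ has $\lvert\mathcal A\rvert=k$, giving $t=\tfrac12+\tfrac{k-1}{2k}=1-\tfrac{1}{2k}$. The fence $F_k^n$ has $k$ downward posts together with one rail for every ordered distinct pair of its $k$ columns, i.e.\ $k(k-1)$ rails, hence $\lvert\mathcal A\rvert=k+k(k-1)=k^2$, giving $t=\tfrac12+\tfrac{k^2-k+1}{2k^2}=1-\tfrac{k-1}{2k^2}$. The Möbius digraph $M_k^n$ has $2k$ ladder arcs and $k$ closing arcs, hence $\lvert\mathcal A\rvert=3k$, giving $t=\tfrac12+\tfrac{(5k-1)/2}{6k}=1-\tfrac{k+1}{12k}$. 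As consistency checks I would use the coincidences already flagged in the text: the $2$-fence equals the $4$-cycle (both $t=1-\tfrac18$) and the $3$-Möbius equals the $3$-fence (both $c=7$, $\lvert\mathcal A\rvert=9$, $t=1-\tfrac19$). I expect the only genuinely error-prone point to be the fence rail count, where the index set must be read as the distinct pairs over the $k$ columns (yielding $k(k-1)$, so that the total is $k^2$) rather than over all $n$ parties; getting this right is exactly what makes the substitution reproduce the stated $t$. Everything else follows immediately from Lemma~\ref{lemma:lifting}, Theorem~\ref{thm:dagfacets}, and the win-probability identity.
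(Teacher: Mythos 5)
Your proposal is correct and follows exactly the paper's route: the paper obtains this corollary by combining Theorem~\ref{thm:dagfacets} with Lemma~\ref{lemma:lifting} and the win-probability rewriting $\Pr[A=X]\leq\frac{1}{2}+\frac{c}{2|\mathcal A(\bm w)|}$ derived in Section~\ref{sec:graphicalgames}, which is precisely your bookkeeping (and your arc counts $k$, $k^2$, $3k$ and the resulting bounds all check out). Your reading of the fence rail index set as pairs over the $k$ columns rather than over $[n]$ is also the intended one, so nothing in the proposal needs repair.
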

Note that the bound for the~$k$-cycle test and for the~$k$-fence test approach the algebraic maximum for an increasing number~$k$.
In contrast, the bound of the~$k$-Möbius test never exceeds~$11/12$.
This makes the latter game favorable for tests involving a large number of parties.

\subsection{Definite Causal Order}
The assumption of \emph{definite causal order} captures the idea that the future does not influence the past.
This causal constraint relaxes the constraint of static causal order insofar that an event may influence \emph{everything} within its future.
Thus, an event may also specify the causal order among the events in its future.
Tailored to the scenario we consider, this means that the sender cannot signal to the receiver whenever the receiver is the \emph{first} in some causal order.
\begin{definition}[Definite causal order,~$\mathcal C_n^\text{causal}$]
  \label{def:causal}
  The~$n$-party correlations~$P_{A|S,R,X}$ agree with \emph{definite causal order} if and only if their probabilities decompose as
  \begin{equation}
    p(a|s,r,x)
    =
    p_\text{first}(r)
    p_0(a|s,r)
    +
    (1-p_\text{first}(r))
    p_1(a|s,r,x)
    \,,
  \end{equation}
  where~$p_\text{first}(r)$ is the probability for the receiver to be the first in the causal order.
  The set of such correlations is~$\mathcal C_n^\text{causal}$.
\end{definition}

\subsubsection{Source-Digraph Polytope}
The set~$\mathcal C_n^\text{causal}$ forms an operational 0/1 polytope~$\mathcal P_n^\text{causal}$.
Therefore, we are in position to use the map~$\Gamma$ and study the resulting set of digraphs instead.
\begin{theorem}[Source digraphs]
  The set of digraphs~$\Gamma(\mathcal P_n^\text{causal})$ is the set of all digraphs on~$n$ vertices with at least one source vertex:
  \begin{equation}
    \Gamma(\mathcal P_n^\text{causal})
    =
    \{ D \mid \mathcal V(D)=[n],\exists i \in[n]:\deg_D^\text{in}(i)=0 \}
    \,.
  \end{equation}
\end{theorem}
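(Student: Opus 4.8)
The plan is to prove the two set inclusions separately, working throughout with the extremal points of~$\mathcal P_n^\text{causal}$. Since~$\mathcal P_n^\text{causal}$ is an operational 0/1 polytope contained in the unit cube~$[0,1]^{2d}$, every 0/1 vector lying in it is automatically a vertex of that cube and hence cannot be written as a nontrivial convex combination of other points of the polytope; conversely, all extremal points are 0/1 by definition of an operational 0/1 polytope. Thus~$\ext(\mathcal P_n^\text{causal})$ is exactly the set of \emph{deterministic} definite-causal behaviors, and~$\Gamma(\mathcal P_n^\text{causal})$ is the set of their signaling digraphs~$[\bm p]$. It therefore suffices to show that a deterministic behavior admits the decomposition of Definition~\ref{def:causal} if and only if its signaling digraph contains a source.

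For the inclusion~$\subseteq$ I would argue by contraposition. Suppose the signaling digraph~$D=[\bm p]$ of a deterministic~$\bm p$ has no source, so every vertex~$r$ has some incoming arc~$s\arc r$, i.e.~$p(0\mid s,r,0)\neq p(0\mid s,r,1)$; for a 0/1 behavior this difference has absolute value~$1$. The decomposition gives
\[
p(0\mid s,r,0)-p(0\mid s,r,1)=\bigl(1-p_\text{first}(r)\bigr)\bigl(p_1(0\mid s,r,0)-p_1(0\mid s,r,1)\bigr),
\]
since the~$p_0$ term is independent of~$x$. The right-hand side has absolute value at most~$1-p_\text{first}(r)$, so matching absolute value~$1$ forces~$p_\text{first}(r)=0$ for \emph{every}~$r$. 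But~$p_\text{first}$ is a probability distribution over which party is first, so~$\sum_{r}p_\text{first}(r)=1$, a contradiction. Hence~$D$ must contain a source.

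For the inclusion~$\supseteq$ I would exhibit, for an arbitrary digraph~$D$ on~$[n]$ with a source~$v$, a deterministic behavior whose signaling digraph is exactly~$D$: let the receiver~$r$ output~$x$ whenever~$s\arc r\in\mathcal A(D)$ and output the constant~$0$ otherwise. The output depends on~$x$ precisely when the arc is present, so the signaling digraph equals~$D$. Because~$v$ has in-degree zero, its output as receiver is the constant and hence independent of~$x$; placing all of the weight of~$p_\text{first}$ on~$v$ (taking~$p_0$ to be the constant output of~$v$ and~$p_1$ to be the behavior for the remaining receivers) satisfies Definition~\ref{def:causal}. This deterministic point lies in~$\mathcal P_n^\text{causal}$, is therefore extremal, and satisfies~$[\bm p]=D$, whence~$D\in\Gamma(\mathcal P_n^\text{causal})$.

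The only genuine obstacle is pinning down that~$p_\text{first}$ must be a true probability distribution with~$\sum_{r}p_\text{first}(r)=1$: without this normalization one could set every~$p_\text{first}(r)=0$, reduce~$p$ to the unconstrained~$p_1$, and realize arbitrary correlations, which would collapse the characterization to all digraphs. I would therefore make this normalization explicit --- it is precisely the statement that \emph{some} party is first in each causal order --- and emphasize that it is the single ingredient separating definite causal order from the full probability polytope. Everything else is the routine bookkeeping of matching the signaling digraph exactly and invoking the 0/1-vertex identification of~$\ext(\mathcal P_n^\text{causal})$.
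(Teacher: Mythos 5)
Your proof is correct and takes essentially the same approach as the paper: the $\supseteq$ direction uses the identical deterministic construction (the receiver outputs $x$ on arcs of~$D$ and a constant otherwise, with all the weight of~$p_\text{first}$ placed on the source), and your contrapositive argument for~$\subseteq$ is the paper's observation---that a party with positive probability of being first cannot receive any signal at a deterministic point---run in reverse, resting on the same normalization $\sum_r p_\text{first}(r)=1$ of the distribution over who is first. The only difference is presentational: you spell out the determinism algebra and the normalization that the paper's one-line argument leaves implicit.
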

\begin{proof}
  For the~$\subseteq$-inclusion it is sufficient to note that since at least one party~$i_0$ is causally before all others, then that party cannot receive any signal:
  Party~$i_0$ describes a source.
  For the opposite inclusion, suppose that~$D$ is a digraph with a source~$i_0$.
  The following correlations~$P_{A|S,R,X}$ with the probabilities
  \begin{equation}
    p(0|s,r,x) =
    \begin{cases}
      1-x & (s,r)\in\mathcal A(D) \\
      1 & \text{otherwise,}
    \end{cases}
  \end{equation}
  respect that signaling structure (we have~$p(0|s,r,0)\oplus p(0|s,r,1)=1$ if and only if~$(s,r)\in\mathcal A(D)$)
  and are causal~(the probability~$p(0|s,i_0,x)$ is independent from~$x$).
\end{proof}
\begin{definition}[Source-digraph polytope, $\mathcal Q_n^\text{src}$]
  The \emph{polytope of source digraphs} is~$\mathcal Q_n^\text{src} := \conv(\Gamma(\mathcal P_n^\text{causal}))$.
\end{definition}

\subsubsection{Kefalopoda Games}
Apart from the trivial inequalities, we identify one family of facet-defining inequalities for the polytope of source digraphs.
These inequalities are represented by \emph{kefalopoda digraphs} (see Fig.~\ref{subfig:kefalopoda}).
\begin{definition}[Kefalopoda digraph]
  For any function~$f:[n]\rightarrow[n]$ without fixed-points, i.e, with~$\forall i:f(i)\neq i$, the \emph{kefalopoda~$\kappa_f$} is
  \begin{equation}
    \mathcal V(\kappa_f)=[n]
    \,,
    \qquad
    \mathcal A(\kappa_f)
    =
    \{
      f(i)\arc i \mid i\in[n]
    \}
    \,.
  \end{equation}
\end{definition}
The term \emph{kefalopoda} is Greek and stands for the family of marine animals that are characterized by a head and any number of attached feet, such as the octopus.
We believe that this term best describes these digraphs.
The reason is that a kefalopoda consists of one or more connected components, where each component has exactly one directed cycle (head), with any number of attached out-trees (feet).
Kefalopoda digraphs are exactly those digraphs where each vertex has in-degree~$1$, and~$f$ describes that \emph{predecessor.}

\begin{theorem}[Source-digraph facets]
  \label{thm:sdfacets}
  The following inequalities are facet-defining for the source-digraph polytope~$\mathcal Q_n^\text{src}$:
  \begin{align}
    & (- \bm 1_{i,j}, 0) & \text{(for any $(i,j)\in[n]^2_{\neq}$; trivial)}
    \,,
    \label{eq:sdfacetstrivial1}
    \\
    & (\bm 1_{i,j}, 1) & \text{(for~$n\geq 3$ and any $(i,j)\in[n]^2_{\neq}$; trivial)}
    \,,
    \label{eq:sdfacetstrivial2}
    \\
    & (\bm \alpha(\kappa_f), n-1) & \text{(for any fixed-point-free function~$f:[n]\rightarrow[n]$)}
    \,.
    \label{eq:sdfacetskefalopoda}
  \end{align}
\end{theorem}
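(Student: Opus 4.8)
The plan is to separate validity from the facet property, and within the facet property to dispose of the two trivial families quickly before spending the real effort on the kefalopoda family. Throughout, recall that the extremal points of $\mathcal Q_n^\text{src}$ are precisely the adjacency vectors $\bm\alpha(D)$ of source digraphs $D$ on $[n]$, and that $d=n(n-1)$. Validity is immediate in every case. Since each $\bm\alpha(D)\in\{0,1\}^d$, the inequalities $-\bm\alpha(D)_{i,j}\leq 0$ and $\bm\alpha(D)_{i,j}\leq 1$ hold entrywise. For the kefalopoda inequality, the $n$ arcs $f(i)\arc i$ are pairwise distinct because they have pairwise distinct heads $i$; hence if all $n$ were present in $D$, every vertex would have in-degree at least one and $D$ would have no source, contradicting $D\in\Gamma(\mathcal P_n^\text{causal})$. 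So at most $n-1$ of them occur, giving $\bm\alpha(\kappa_f)\cdot\bm\alpha(D)\leq n-1$.

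Before the facet arguments I would record that $\mathcal Q_n^\text{src}$ is full-dimensional: the empty digraph $\bm 0$ and the $d$ single-arc digraphs $\bm 1_{(i,j)}$ are all source digraphs (a single arc leaves $n-1\geq 1$ sources), and together they form $d+1$ affinely independent points. The trivial facets follow at once. For $(-\bm 1_{(i,j)},0)$, the empty digraph together with the $d-1$ single-arc digraphs $\bm 1_{(k,l)}$ with $(k,l)\neq(i,j)$ give $d$ affinely independent source digraphs all avoiding $i\arc j$. For $(\bm 1_{(i,j)},1)$ with $n\geq 3$, the digraph $\bm 1_{(i,j)}$ together with the two-arc digraphs $\bm 1_{(i,j)}+\bm 1_{(k,l)}$, $(k,l)\neq(i,j)$, work: each has a source because its two heads cover at most two of the $n\geq 3$ vertices, and subtracting the base leaves the $d-1$ independent vectors $\bm 1_{(k,l)}$.

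The heart of the proof is producing $d$ affinely independent source digraphs saturating $\bm\alpha(\kappa_f)\cdot\bm x=n-1$, i.e.\ containing exactly $n-1$ of the arcs $f(i)\arc i$. Call the $n$ coordinates indexed by these arcs the \emph{kefalopoda coordinates} and the remaining $n(n-2)$ the \emph{free coordinates}. I would fix the base $D_0:=\kappa_f\setminus\{f(0)\arc 0\}$, whose unique source is vertex $0$, and build three families whose differences from $D_0$ I can control: family (B), for each free arc $u\arc v$ with $v\neq 0$, the digraph $D_0+\{u\arc v\}$, still sourced at $0$, contributing the difference $\bm 1_{(u,v)}$; family (C), for each $j\neq 0$, the digraph $\kappa_f\setminus\{f(j)\arc j\}$, sourced at $j$, contributing $\bm 1_{(f(0),0)}-\bm 1_{(f(j),j)}$; and family (D), for each free arc $u\arc 0$, the digraph $(\kappa_f\setminus\{f(1)\arc 1\})+\{u\arc 0\}$, sourced at $1$, contributing $(\bm 1_{(f(0),0)}-\bm 1_{(f(1),1)})+\bm 1_{(u,0)}$. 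All of these are simple saturating source digraphs, and the count $1+(n-1)(n-2)+(n-1)+(n-2)=n(n-1)=d$ is exactly right.

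Affine independence then reduces to linear independence of the $d-1$ differences, which I would verify by reading off coordinate blocks: the free coordinates with head $\neq 0$ are hit only by family (B), forcing its coefficients to vanish; the free coordinates with head $0$ are hit only by family (D), forcing its coefficients to vanish; and what remains is a combination of the vectors $\bm 1_{(f(0),0)}-\bm 1_{(f(j),j)}$, whose independence is clear since each $\bm 1_{(f(j),j)}$ with $j\neq 0$ appears in exactly one of them. The one genuinely delicate point, and the step I expect to be the main obstacle, is family (D): a free arc pointing \emph{into} the source vertex $0$ cannot be attached to $D_0$ without destroying its only source, so one must first relocate the source (here to vertex $1$) before adding the arc, which is why the (D)-differences are not single basis vectors but carry an extra kefalopoda term; the block argument is precisely what shows this contamination does not spoil independence. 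A final routine check is that every construction is a simple digraph (each added arc is free, hence absent from the kefalopoda skeleton) and that the degenerate case $n=2$, where families (B) and (D) are empty, still yields the two required saturating digraphs.
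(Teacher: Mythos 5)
Your proof is correct, and at the skeleton level it matches the paper's: full-dimensionality from the empty and single-arc digraphs, the identical treatment of the two trivial families, and then $d$ explicit affinely independent saturating source digraphs for the kefalopoda inequality. The kefalopoda construction itself, however, is genuinely different, and yours is the more robust of the two. The paper's proof takes, for every ordered pair $(i,j)\in[n]^2_{\neq}$, the digraph $D_{i,j}$ with arcs $\{i\arc j\}\cup\mathcal A(\kappa_f)\setminus\{f(i)\arc i\}$ and asserts that these are $n(n-1)$ distinct, affinely independent points. When $f$ is not injective this breaks down: if $f(j_1)=f(j_2)=i$ with $j_1\neq j_2$, then $i\arc j_1$ and $i\arc j_2$ both already lie in $\mathcal A(\kappa_f)$, so $D_{i,j_1}=D_{i,j_2}=\kappa_f$ minus the arc into $i$, and the family has fewer than $d$ members; relatedly, the paper's independence criterion (that $D_{i,j}$ with $i\arc j\in\mathcal A(\kappa_f)$ is ``the only digraph with a $0$ at coordinate $(i,j)$'') is not literally true, since every $D_{j,l}$ also misses that arc. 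Your three-family construction avoids this defect precisely because you never re-add an arc that is already in the kefalopoda: family (C) supplies the digraphs $\kappa_f\setminus\{f(j)\arc j\}$ that the paper's collapsed duplicates were meant to provide, families (B) and (D) attach each free arc exactly once, and the source-relocation step in (D) (moving the source from vertex $0$ to vertex $1$ before adding an arc with head $0$) is exactly what keeps those points feasible. The resulting block-triangular independence argument --- free coordinates with head $\neq 0$ kill (B), free coordinates with head $0$ kill (D), and the pairwise-distinct kefalopoda coordinates $(f(j),j)$ kill (C) --- is valid for every fixed-point-free $f$, injective or not, and you also handle the degenerate case $n=2$. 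In short, your route does not merely reprove the theorem; it closes a genuine gap in the paper's own argument for kefalopodas having a vertex of out-degree at least two.
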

\begin{proof}
  The polytope~$\mathcal Q_n^\text{src}$ is full-dimensional because the set~$\mathcal T$ of the digraphs with at most one arc consists of~$d+1$ affinely independent source digraphs.
  All inequalities are obviously valid for~$\mathcal Q_n^\text{src}$.
  Now, for each family of inequalities, we construct a set of~$d$ affinely independent source digraphs that saturate the respective inequality; hence, they are facet-defining.
  For the inequality~\eqref{eq:sdfacetstrivial1}, this set is~$\mathcal T$ without the digraph with the arc~$i\arc j$.
  For the inequality~\eqref{eq:sdfacetstrivial2} with~$n\geq 3$, this is the set of digraphs with the arc~$i\arc j$ and at most one additional arc.
  Finally, for the kefalopoda inequality~\eqref{eq:sdfacetskefalopoda}, the set consists of all digraphs~$D_{i,j}$ with
  \begin{equation}
    \mathcal V(D_{i,j})=[n]
    \,,
    \qquad
    \mathcal A(D_{i,j})
    =
    \{i \arc j\}
    \cup
    \mathcal A(\kappa_f)
    \setminus
    \{f(i) \arc i\}
  \end{equation}
  for~$i,j\in[n]^2_{\neq}$:
  Take an arbitrary arc $i\arc j$, all arcs from the kefalopoda digraph, and make $i$ a source by removing the single arc pointing to $i$.
  The digraph~$D_{i,j}$ saturates the inequality because it has~$n-1$ arcs in common with~$\kappa_f$.
  Also,~$D_{i,j}$ is a source digraph with the source~$i$.
  The cardinality of this set is $n(n-1)$ because for every $i\arc j$ there is one such digraph $D_{i,j}$ in the set.
  Finally, these digraphs are affinely independent for the following reason.
  The digraph $D_{i,j}$ with $i\arc j\not\in \mathcal A(\kappa_f)$ is the only digraph that contributes in the dimension $(i,j)$.
  Similarly, the digraph~$D_{i,j}$ with~$i\arc j\in\mathcal A(\kappa_f)$ is the only digraph with a 0 at coordinate $(i,j)$.
\end{proof}

Again, we are in position to apply the lifting lemma, and we get:
\begin{corollary}[Definite causal order]
  \label{cor:kefalopoda}
  The kefalopoda inequality
  \begin{align}
    & \graphicaltest\left(\kappa_f, 1 - \frac{1}{2n}\right) & \text{(for any fixed-point-free function~$f:[n]\rightarrow[n]$)}
  \end{align}
  is a $n$-party graphical test for definite causal order.
  Moreover, this graphical test is facet-defining for~$\mathcal P_n^\text{causal}$.
\end{corollary}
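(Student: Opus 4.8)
The plan is to derive this corollary as a direct application of the lifting lemma (Lemma~\ref{lemma:lifting}) to the kefalopoda facet of the source-digraph polytope furnished by Theorem~\ref{thm:sdfacets}, followed by a translation into the winning-probability language of Definition~\ref{def:gggt}. The substantive combinatorial work is already contained in Theorem~\ref{thm:sdfacets}; the corollary only has to package it and change coordinates.

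Concretely I would set $\mathcal P = \mathcal P_n^\text{causal}$, $\mathcal Q = \mathcal Q_n^\text{src}$, and take the facet-defining inequality $(\bm w, c) = (\bm\alpha(\kappa_f), n-1)$. First I would check the two structural hypotheses that Lemma~\ref{lemma:lifting} imposes on this inequality: its coefficient vector $\bm\alpha(\kappa_f)$ is a $0/1$ vector and hence non-negative, and since $f$ is fixed-point-free the $n$ arcs $f(i)\arc i$ are pairwise distinct, so $|\mathcal A(\kappa_f)| = n \geq 2$ and the inequality is nontrivial. Then I would confirm full-dimensionality of both polytopes. For $\mathcal Q_n^\text{src}$ this is already recorded in the proof of Theorem~\ref{thm:sdfacets} (the $d+1$ digraphs with at most one arc are affinely independent source digraphs). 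For $\mathcal P_n^\text{causal}$ I would exhibit $2d+1$ affinely independent extremal points directly: the behavior $(\bm 0, \bm 0)$ in which the receiver always outputs $1$; the $d$ non-signaling behaviors $(\bm 1_i, \bm 1_i)$ carried by the arc-free source digraph; and the $d$ single-arc source behaviors $(\bm 1_i, \bm 0)$ in which the receiver copies the sender's bit on a single pair and outputs $1$ on all others. After subtracting $(\bm 0, \bm 0)$, the differences $(\bm 1_i, \bm 1_i)$ and $(\bm 1_i, \bm 0)$ together span every $(\bm 1_i, \bm 0)$ and $(\bm 0, \bm 1_i)$, i.e.\ the full standard basis of $\mathbb R^{2d}$, establishing affine independence. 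With these hypotheses in place, Lemma~\ref{lemma:lifting} yields that $((\bm\alpha(\kappa_f), -\bm\alpha(\kappa_f)), n-1)$ is facet-defining for $\mathcal P_n^\text{causal}$.

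Finally I would rewrite the lifted inequality in game form using the general identity obtained earlier, $\Pr[\win(\bm w)] \leq \tfrac{1}{2} + \tfrac{c}{2|\mathcal A(\bm w)|}$. Substituting $|\mathcal A(\kappa_f)| = n$ and $c = n-1$ gives $\tfrac{1}{2} + \tfrac{n-1}{2n} = 1 - \tfrac{1}{2n}$, which is precisely the bound of the graphical test $\graphicaltest(\kappa_f, 1 - \tfrac{1}{2n})$, and the facet-defining property carries over from the lifted inequality unchanged. I do not anticipate a real obstacle here: the only ingredient not handed to us is the full-dimensionality of $\mathcal P_n^\text{causal}$, which the explicit point set above settles, so the remainder is routine bookkeeping around Lemma~\ref{lemma:lifting} and Theorem~\ref{thm:sdfacets}.
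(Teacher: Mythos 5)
Your proposal is correct and follows essentially the same route as the paper: the paper likewise obtains this corollary by applying the lifting lemma (Lemma~\ref{lemma:lifting}) to the kefalopoda facet $(\bm\alpha(\kappa_f), n-1)$ from Theorem~\ref{thm:sdfacets} and then rewriting the lifted inequality in game form via $\Pr[\win(G)] \leq \tfrac{1}{2} + \tfrac{c}{2|\mathcal A(G)|}$, giving the bound $1 - \tfrac{1}{2n}$. The only difference is that you explicitly verify the hypotheses of the lifting lemma (non-negativity, nontriviality, and in particular the full-dimensionality of $\mathcal P_n^\text{causal}$ via an explicit set of $2d+1$ affinely independent extremal points), a check the paper leaves implicit rather than a genuinely different approach.
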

For~$n=3$ parties, there are only two non-isomorphic kefalopoda digraphs: the~$3$-cycle and the~$2$-cycle with one outgoing arc attached.
Under the constraint of definite causal order, the chance of winning these games is upper bounded by~$5/6$.
In Fig.~\ref{fig:kefatest} we present all kefalopoda tests for definite causal order for~$n=4$ parties.
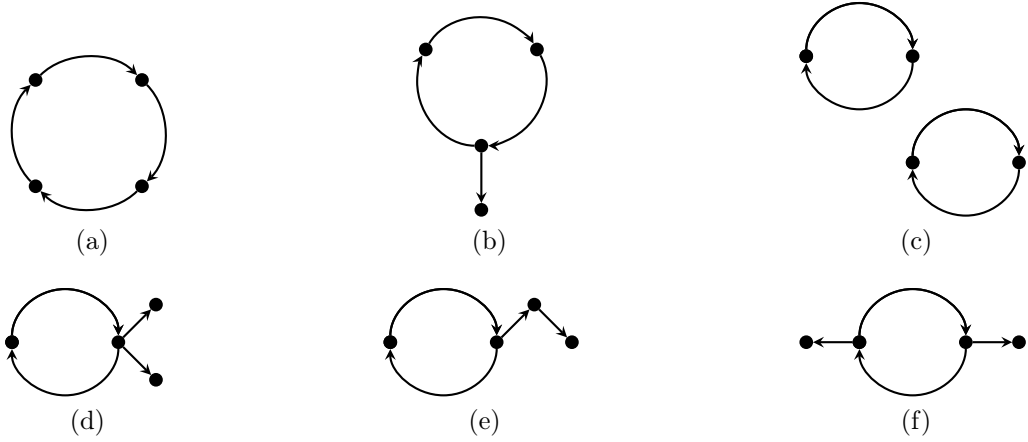
\begin{figure}
  \subfloat[\label{subfig:4cycle}]{%
    \begin{tikzpicture}
      \def\tot{4}
      \pgfmathsetmacro{\angledelta}{360/\tot}
      \foreach \x in {1, 2, 3, ..., \tot} {
        \pgfmathsetmacro{\angle}{(\x-1) * \angledelta + \angledelta/2 + 90}
        \fill (\angle:\radiusmain) circle (\vertexradius);
        \draw[arc] (\angle:\radiusmain) arc[start angle=\angle, delta angle=-\angledelta,radius=\radiusmain];
      }
    \end{tikzpicture}
  }
  \hfill
  \subfloat[\label{subfig:3cycle1}]{%
    \begin{tikzpicture}
      \def\tot{3}
      \pgfmathsetmacro{\radius}{sqrt(pow(\radiusmain,2)/2)/2+\radiusmain/2}
      \pgfmathsetmacro{\angledelta}{360/\tot}
      \foreach \x in {1, 2, 3, ..., \tot} {
        \pgfmathsetmacro{\angle}{(\x-1) * \angledelta + \angledelta/2 + 90}
        \fill (\angle:\radius) circle (\vertexradius);
        \draw[arc] (\angle:\radius) arc[start angle=\angle, delta angle=-\angledelta,radius=\radius];
      }
      \pgfmathsetmacro{\angle}{(2-1) * \angledelta + \angledelta/2 + 90}
      \fill (\angle:\radius) -- ++(0,-\radius) coordinate (p) circle (\vertexradius);
      \draw[arc] (\angle:\radius) -- (p);
    \end{tikzpicture}
  }
  \hfill
  \subfloat[\label{subfig:22cycle}]{%
    \begin{tikzpicture}
      \def\tot{2}
      \pgfmathsetmacro{\radius}{sqrt(pow(\radiusmain,2)/2)}
      \pgfmathsetmacro{\angledelta}{360/\tot}
      \foreach \x in {1, 2, 3, ..., \tot} {
        \pgfmathsetmacro{\angle}{(\x-1) * \angledelta + \angledelta/2 + 90}
        \fill (\angle:\radius) circle (\vertexradius);
        \draw[arc] (\angle:\radius) arc[start angle=\angle, delta angle=-\angledelta,radius=\radius];
      }
      \foreach \x in {1, 2, 3, ..., \tot} {
        \pgfmathsetmacro{\angle}{(\x-1) * \angledelta + \angledelta/2 + 90}
        \path (\angle:\radius) -- ++(-45:2*\radiusmain) coordinate (p);
        \fill (p) circle (\vertexradius);
        \draw[arc] (p) arc[start angle=\angle, delta angle=-\angledelta,radius=\radius];
      }
    \end{tikzpicture}
  }

  \subfloat[\label{subfig:2cycletree}]{%
    \begin{tikzpicture}
      \def\tot{2}
      \pgfmathsetmacro{\radius}{sqrt(pow(\radiusmain,2)/2)}
      \pgfmathsetmacro{\angledelta}{360/\tot}
      \foreach \x in {1, 2, 3, ..., \tot} {
        \pgfmathsetmacro{\angle}{(\x-1) * \angledelta + \angledelta/2 + 90}
        \fill (\angle:\radius) circle (\vertexradius);
        \draw[arc] (\angle:\radius) arc[start angle=\angle, delta angle=-\angledelta,radius=\radius];
      }
      \pgfmathsetmacro{\angle}{(2-1) * \angledelta + \angledelta/2 + 90}
      \fill (\angle:\radius) -- ++(45:\radius) coordinate (p) circle (\vertexradius);
      \draw[arc] (\angle:\radius) -- (p);
      \pgfmathsetmacro{\angle}{(2-1) * \angledelta + \angledelta/2 + 90}
      \fill (\angle:\radius) -- ++(-45:\radius) coordinate (p) circle (\vertexradius);
      \draw[arc] (\angle:\radius) -- (p);
    \end{tikzpicture}
  }
  \hfill
  \subfloat[\label{subfig:2cyclepath}]{%
    \begin{tikzpicture}
      \def\tot{2}
      \pgfmathsetmacro{\radius}{sqrt(pow(\radiusmain,2)/2)}
      \pgfmathsetmacro{\angledelta}{360/\tot}
      \foreach \x in {1, 2, 3, ..., \tot} {
        \pgfmathsetmacro{\angle}{(\x-1) * \angledelta + \angledelta/2 + 90}
        \fill (\angle:\radius) circle (\vertexradius);
        \draw[arc] (\angle:\radius) arc[start angle=\angle, delta angle=-\angledelta,radius=\radius];
      }
      \pgfmathsetmacro{\angle}{(2-1) * \angledelta + \angledelta/2 + 90}
      \fill (\angle:\radius) -- ++(45:\radius) coordinate (p) circle (\vertexradius);
      \draw[arc] (\angle:\radius) -- (p);
      \fill (p) -- ++(-45:\radius) coordinate (1) circle (\vertexradius);
      \draw[arc] (p) -- (1);
    \end{tikzpicture}
  }
  \hfill
  \subfloat[\label{subfig:2cycle2}]{%
    \begin{tikzpicture}
      \def\tot{2}
      \pgfmathsetmacro{\radius}{sqrt(pow(\radiusmain,2)/2)}
      \pgfmathsetmacro{\angledelta}{360/\tot}
      \foreach \x in {1, 2, 3, ..., \tot} {
        \pgfmathsetmacro{\angle}{(\x-1) * \angledelta + \angledelta/2 + 90}
        \fill (\angle:\radius) circle (\vertexradius);
        \draw[arc] (\angle:\radius) arc[start angle=\angle, delta angle=-\angledelta,radius=\radius];
      }
      \pgfmathsetmacro{\angle}{(2-1) * \angledelta + \angledelta/2 + 90}
      \fill (\angle:\radius) -- ++(0:\radius) coordinate (p) circle (\vertexradius);
      \draw[arc] (\angle:\radius) -- (p);
      \pgfmathsetmacro{\angle}{(1-1) * \angledelta + \angledelta/2 + 90}
      \fill (\angle:\radius) -- ++(180:\radius) coordinate (p) circle (\vertexradius);
      \draw[arc] (\angle:\radius) -- (p);
    \end{tikzpicture}
  }
  \caption{%
    All kefalopoda graphical games for~$n=4$ parties.
    When the parties win any of these games with a chance higher than~$7/8$, then their statistics \emph{disagree} with definite causal order.
  }
  \label{fig:kefatest}
\end{figure}

\subsection{Bi-Causal Order}
Some multi-party correlations may be incompatible with any definite causal order simply because the correlations among few of them are.
The constraint of \emph{bi-causal order,} or rather the violation thereof, captures this \emph{genuinely multi-party character} of correlations incompatible with any definite causal order.
This is similar to the notions of genuinely multi-party nonlocality~\cite{svetlichny1987,gallego2012,bancal2013}.
Here, the parties are partitioned into two non-empty groups.
Within each group communication is unrestricted.
Also, one group (the ``past'') can communicate to the other (the ``future'').
It is only that the parties from the ``future'' group cannot signal to any party in the ``past.''
\begin{definition}[Bi-causal order,~$\mathcal C_n^\text{bi}$]
  \label{def:bicausal}
  The~$n$-party correlations~$P_{A|S,R,X}$ agree with \emph{bi-causal order} if and only if
  their probabilities decompose as
  \begin{equation}
    p(a|s,r,x)
    =
    \sum_{\mathcal F: s\not\in\mathcal F,r\in\mathcal F}
    p(\mathcal F)
    p^{\bot}_{\mathcal F}(a|s,r)
    +
    \sum_{\mathcal F: r\in\mathcal F\rightarrow s\in\mathcal F}
    p(\mathcal F)
    p^{\top}_{\mathcal F}(a|s,r,x)
    \,,
  \end{equation}
  where~$\mathcal F\subsetneq[n]$ is a non-empty strict subset of $[n]$.
  The set of such correlations is~$\mathcal C_n^\text{bi}$.
\end{definition}

\subsubsection{Not-Strong Polytope}
As for the previous two constraints, the set~$\mathcal C_n^\text{bi}$ forms an operational 0/1 polytope~$\mathcal P_n^\text{bi}$.
Again, we use the map~$\Gamma$ to study the resulting set of digraphs.
Here, it turns out that the connectivity of a digraph plays a deceive rôle.
\begin{theorem}[Not-strong digraphs]
  The set of digraphs~$\Gamma(\mathcal P_n^\text{bi})$ is the set of all digraphs on~$n$ vertices that are not strongly connected:
  \begin{equation}
    \Gamma(\mathcal P_n^\text{bi})
    =
    \{ D \mid \mathcal V(D)=[n],\exists (i,j) \in[n]^2_{\neq}: j\not\in\suc_D(i) \}
    \,.
  \end{equation}
\end{theorem}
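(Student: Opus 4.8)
The plan is to prove the set equality by the two inclusions, reducing both to a purely graph-theoretic reformulation of ``not strongly connected'' that matches the bipartition structure of Definition~\ref{def:bicausal}. First I would record the combinatorial lemma that a digraph $D$ on $[n]$ is \emph{not} strong if and only if there is a nonempty proper subset $\mathcal F\subsetneq[n]$ with no arc from $[n]\setminus\mathcal F$ into $\mathcal F$. For the forward direction I would pass to the condensation of $D$, which is acyclic and, since $D$ is not strong, has at least two strongly connected components; taking a source component (one with no incoming arcs) of this acyclic condensation as $\mathcal F$ produces a nonempty proper set receiving no arcs from its complement. For the converse, if such an $\mathcal F$ exists then every arc leaving $[n]\setminus\mathcal F$ remains inside $[n]\setminus\mathcal F$, so no vertex of $[n]\setminus\mathcal F$ can reach any vertex of $\mathcal F$, exhibiting a pair $(i,j)$ with $j\notin\suc_D(i)$.

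With the lemma available, the $\supseteq$ inclusion is the easy half and mirrors the construction in the source-digraph theorem. Given a not-strong $D$, the lemma supplies a bipartition $(\mathcal F,[n]\setminus\mathcal F)$ with no arc from $[n]\setminus\mathcal F$ to $\mathcal F$. I would then exhibit the deterministic behavior $p(0\mid s,r,x)=1-x$ for $(s,r)\in\mathcal A(D)$ and $p(0\mid s,r,x)=1$ otherwise. Its signaling digraph is exactly $D$, and since it never signals from $[n]\setminus\mathcal F$ into $\mathcal F$ it fits the bi-causal decomposition with $p(\mathcal F)=1$: the $x$-independent sum carries precisely the forbidden pairs and the $x$-dependent sum carries all others. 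Hence this behavior lies in $\mathcal C_n^\text{bi}$ and $D\in\Gamma(\mathcal P_n^\text{bi})$.

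For the $\subseteq$ inclusion I would show that every extremal behavior respects a single bipartition. The key is the description $\mathcal C_n^\text{bi}=\conv\bigl(\bigcup_{\mathcal F} R_{\mathcal F}\bigr)$, where $R_{\mathcal F}$ is the subpolytope of the correlation hypercube cut out by the equalities $p(0\mid s,r,0)=p(0\mid s,r,1)$ for all $s\notin\mathcal F,\,r\in\mathcal F$; each $R_{\mathcal F}$ therefore still has $0/1$ extremal points. Since the extremal points of a convex hull of a union of polytopes lie in the union of their extremal points, any $\bm p\in\ext(\mathcal P_n^\text{bi})$ lies in some $\ext(R_{\mathcal F})$, so its signaling digraph $[\bm p]$ carries no arc from $[n]\setminus\mathcal F$ to $\mathcal F$; by the lemma $[\bm p]$ is not strong, which is the desired inclusion.

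I expect the combinatorial lemma, and in particular the clean choice of a source component of the condensation, to be the conceptual core. The genuinely delicate step is the extremal-point argument for $\subseteq$: one must ensure that extremality in the mixed polytope $\mathcal C_n^\text{bi}$ forces purity with respect to a single bipartition rather than merely a convex mixture of bipartition-respecting behaviors, and this is exactly what the convex-hull-of-union description secures.
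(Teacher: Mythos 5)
Your proof is correct and follows essentially the same route as the paper: the $\supseteq$ direction uses the identical deterministic behavior ($p(0\mid s,r,x)=1-x$ on arcs of $D$, constant otherwise) together with a source component of the condensation as $\mathcal F$, and the $\subseteq$ direction rests on the same observation that a cut with no crossing arcs admits no crossing path. The only difference is one of detail: your convex-hull-of-union argument makes explicit the step --- left implicit in the paper's one-line $\subseteq$ proof --- that extremality in $\mathcal P_n^\text{bi}$ forces purity with respect to a single bipartition $\mathcal F$ rather than a mixture.
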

\begin{proof}
  This proof is similar to the other digraph proofs.
  The~$\subseteq$-inclusion follows from the fact that no party in~$\mathcal F$ can signal to any party in~$[n]\setminus\mathcal F$,
  and therefore no path from~$i_0$ to~$j_0$ exists for any pair~$(i_0\in\mathcal F,j_0\in[n]\setminus\mathcal F)$.
  For the opposite inclusion, let~$D$ be a digraph that is not strong.
  This means we can decompose~$D$ into a family~$\{\mathcal H_i\}_i$ of strongly connected components.
  Now, the correlations
  \begin{equation}
    p(0|s,r,x) =
    \begin{cases}
      1-x & (s,r)\in\mathcal A(D) \\
      1 & \text{otherwise}
    \end{cases}
  \end{equation}
  respect that signaling and are bi-causal with~$\mathcal F=\mathcal H_{i_0}$, where~$\mathcal H_{i_0}$ is a source in the \emph{condensation} of~$D$.
\end{proof}
\begin{definition}[Not-strong polytope, $\mathcal Q_n^\text{notstrong}$]
  The \emph{polytope of not-strong digraphs} is~$\mathcal Q_n^\text{notstrong} := \conv(\Gamma(\mathcal P_n^\text{bi}))$.
\end{definition}

\subsubsection{Minimally Strong Games}
We identify facets of the not-strong polytope~$\mathcal Q_n^\text{notstrong}$ that correspond to minimally strong digraphs~\cite{geller1970}.
These are strong digraphs where the removal of an arbitrary arc breaks the strong connectivity.
\begin{definition}[Minimally strong digraphs~\cite{geller1970}]
  The set~$\mathcal M_n$ is the set of all strong digraphs~$D$ over~$\mathcal V(D)=[n]$
  such that for any~$(i\arc j)\in\mathcal A(D)$, the digraph~$D_{i,j}$ with
  \begin{equation}
    \mathcal V(D_{i,j}) = [n]
    \,,
    \qquad
    \mathcal A(D_{i,j}) =
    \mathcal A(D) \setminus \{i\arc j\}
  \end{equation}
  is \emph{not strongly connected.}
\end{definition}
In Fig.~\ref{subfig:minstrong} we present an example of such a digraph.

\begin{theorem}[Not-strong facets]
  \label{thm:notstrongfacets}
  The following inequalities are facet-defining for the not-strong polytope~$\mathcal Q_n^\text{notstrong}$:
  \begin{align}
    & (- \bm 1_{i,j}, 0) & \text{(for any $(i,j)\in[n]^2_{\neq}$; trivial)}
    \,,
    \label{eq:nsfacetstrivial1}
    \\
    & (\bm 1_{i,j}, 1) & \text{(for~$n\geq 3$ and any $(i,j)\in[n]^2_{\neq}$; trivial)}
    \,,
    \label{eq:nsfacetstrivial2}
    \\
    & (\bm \alpha(G), |\mathcal A(G)|-1) & \text{(for any~$G\in\mathcal M_n$)}
    \,.
    \label{eq:nsfacetsminimal}
  \end{align}
\end{theorem}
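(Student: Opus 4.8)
The plan is to follow the same blueprint as the proof of Theorem~\ref{thm:sdfacets}: establish full-dimensionality, check validity of each family, and then, for each inequality, exhibit $d=n(n-1)$ affinely independent not-strong digraphs that saturate it. Full-dimensionality carries over verbatim from the source-digraph case, since the collection $\mathcal T$ of all digraphs on $[n]$ with at most one arc consists of $d+1$ affinely independent vectors ($\bm 0$ together with the $\bm 1_{i,j}$), and for $n\geq 2$ every such digraph is trivially not strong, hence lies in $\mathcal Q_n^\text{notstrong}$. The two trivial families \eqref{eq:nsfacetstrivial1} and \eqref{eq:nsfacetstrivial2} are then handled exactly as in Theorem~\ref{thm:sdfacets}: for \eqref{eq:nsfacetstrivial1} I would take $\mathcal T$ with the digraph carrying the single arc $i\arc j$ removed, and for \eqref{eq:nsfacetstrivial2} I would take all digraphs containing $i\arc j$ together with at most one further arc (for $n\geq 3$ these have at most two arcs on at least three vertices and are therefore not strong).

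The substance lies in the minimally strong family \eqref{eq:nsfacetsminimal}. Validity is immediate: if a digraph $D$ contained every arc of $G$, then, because $G$ is strong on the full vertex set $[n]$ and adding arcs preserves strong connectivity, $D$ itself would be strong, contradicting $D\in\Gamma(\mathcal P_n^\text{bi})$; hence $D$ omits at least one arc of $G$ and $\bm\alpha(G)\cdot\bm\alpha(D)\leq|\mathcal A(G)|-1$. To prove the inequality facet-defining I would construct $d$ saturating not-strong digraphs indexed by the arc positions $(k,l)\in[n]^2_{\neq}$. For each position $(k,l)\in\mathcal A(G)$ the digraph $G-(k\arc l)$ works: it is not strong by minimality of $G$ and contains exactly $|\mathcal A(G)|-1$ arcs of $G$. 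For each position $(k,l)\notin\mathcal A(G)$ I would set $D_{k,l}:=H-e_{k,l}$, where $H$ is the strong digraph obtained by adding the arc $k\arc l$ to $G$ and $e_{k,l}$ is a suitably chosen arc of $G$ whose removal destroys strong connectivity; this $D_{k,l}$ then contains $k\arc l$, omits exactly one arc of $G$, and is not strong.

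The crux is establishing that such a $G$-arc $e_{k,l}$ always exists. Here I would invoke the classical bound that a minimally strong digraph has at most $2(n-1)$ arcs, so that $H$ has at most $2n-1<2n$ arcs. Since $\sum_{v}\deg^{\text{out}}_H(v)=|\mathcal A(H)|<2n$, some vertex $w$ has out-degree at most one in $H$, and strong connectivity forces it to be exactly one; removing $w$'s unique out-arc leaves $w$ unable to reach any other vertex, so that arc breaks strong connectivity. Moreover this arc must belong to $G$: in $G$ the vertex $k$ already has an out-arc different from $k\arc l$ (as $(k,l)\notin\mathcal A(G)$), so $\deg^{\text{out}}_H(k)\geq 2$ and hence $w\neq k$, which means $w$'s out-arcs in $H$ coincide with those in $G$. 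Taking $e_{k,l}$ to be this arc gives the desired not-strong $D_{k,l}$.

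Finally, affine independence follows from a block-triangular argument insensitive to the precise choice of $e_{k,l}$. Ordering the coordinates with the arcs of $G$ first, the difference vectors from the external vector $\bm\alpha(G)$ are $-\bm 1_{k,l}$ for positions $(k,l)\in\mathcal A(G)$ and $\bm 1_{k,l}-\bm 1_{e_{k,l}}$ for positions $(k,l)\notin\mathcal A(G)$, so the matrix of these $d$ differences has the form $\left(\begin{smallmatrix}-I & 0\\ B & I\end{smallmatrix}\right)$ with determinant $(-1)^{|\mathcal A(G)|}\neq 0$; linear independence of the differences yields affine independence of the $d$ digraphs, all of which saturate \eqref{eq:nsfacetsminimal}. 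I expect the main obstacle to be precisely the existence of $e_{k,l}$ \emph{as an arc of $G$} --- everything else is bookkeeping --- and in particular the reliance on the maximum-arc bound $2(n-1)$ for minimally strong digraphs, which must be cited or reproved.
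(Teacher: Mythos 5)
Your proposal is correct, and its skeleton coincides with the paper's proof: full-dimensionality and the two trivial families are inherited from the argument of Theorem~\ref{thm:sdfacets}; the saturating set consists of the $|\mathcal A(G)|$ single-arc deletions of $G$ together with, for each arc $k\arc l$ absent from $G$, one not-strong digraph containing $k\arc l$ and all but one arc of $G$; and affine independence follows from the block-triangular structure you write out (your explicit determinant argument, using differences from the \emph{external} point $\bm\alpha(G)$, is a rigorous rendering of the paper's terser claim that each such digraph ``contributes to an otherwise untouched coordinate''). The genuine divergence is how the deletable arc $e_{k,l}\in\mathcal A(G)$ is produced. You obtain it globally: by the classical bound $|\mathcal A(G)|\leq 2(n-1)$ for minimally strong digraphs, the digraph $H=G+(k\arc l)$ has a vertex $w\neq k$ of out-degree exactly one, and deleting its unique out-arc leaves $w$ with no out-neighbours. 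This is valid, but, as you yourself flag, it imports a nontrivial external theorem that must be cited (it is classical, e.g.\ provable by an ear-decomposition count) --- and nothing in the theorem statement forces you to pay that price. The paper instead makes a local, self-contained choice: for the added arc $i\arc j$ it deletes the first arc $j\arc k$ on a path in $G$ from $j$ to $i$, i.e.\ an out-arc of the \emph{head} of the added arc. Non-strongness then needs no counting: since $G$ is minimally strong, $G$ minus $j\arc k$ has no path from $j$ to $k$ (such a path would make the arc $j\arc k$ redundant), and the added arc $i\arc j$ points \emph{into} $j$, so it cannot lie on any path starting at $j$; hence $j$ still cannot reach $k$. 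Your route buys a maximally transparent certificate of non-strongness (a vertex of out-degree zero) at the cost of the arc bound; the paper's route is elementary and uses only the defining property of minimal strongness. Both constructions have the same incidence pattern (the added arc plus $\mathcal A(G)$ minus one arc of $G$), so either one feeds into the identical affine-independence bookkeeping and completes the proof.
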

\begin{proof}
  The proof of full-dimensionality and trivial inequalities for Theorem~\ref{thm:sdfacets} carries over to the present theorem.
  The minimally strong inequalities~\eqref{eq:nsfacetsminimal} are obviously valid for~$\mathcal Q_n^\text{notstrong}$.
  Now, for some~$G\in\mathcal M_n$, we construct a set of~$d$ affinely independent digraphs that are not strongly connected, and where each of them saturates the inequality~\eqref{eq:nsfacetsminimal}.
  These are constructed by following a similar pattern as for the kefalopoda inequalities~\eqref{eq:sdfacetskefalopoda}.
  First, consider the set
  \begin{equation}
    \mathcal S_0 := \{ D \mid \mathcal V(D)=[n], \mathcal A(D) \subsetneq \mathcal A(G), |\mathcal A(D)| + 1 = |\mathcal A(G)| \}
  \end{equation}
  of all digraphs obtained by the removal of a single arc from~$G$.
  These digraphs are, by definition, not strongly connected.
  Also, the set~$\mathcal S_0$ contains~$|\mathcal A(G)|$ such digraphs, which are affinely independent.
  Next, for each arc~$i\arc j$ absent in~$G$, we construct a digraph~$D_{i,j}$ that saturates the inequality and contains that arc:
  \begin{equation}
    \mathcal V(D_{i,j}) = [n]
    \,,
    \qquad
    \mathcal A(D_{i,j}) = \{i\arc j\} \cup \mathcal A(G) \setminus \{ j\arc k\}
    \,,
  \end{equation}
  where~$j\arc k\in\mathcal A(G)$ is the first arc on a path from~$j$ to~$i$ in~$G$.
  Note that, because~$G$ is strong, such an arc~$j\arc k$ always exists.
  Also,~$D_{i,j}$ is not strongly connected.
  This is because adding the arc~$i\arc j$ does not compensate the removal of~$j\arc k$ from~$G$.
  These digraphs are affinely independent because each one of them contributes to an otherwise untouched coordinate $(i,j)$.
  Altogether, we have~$n(n-1)$ affinely independent digraphs that saturate said inequality.
\end{proof}

Now, we apply the lifting lemma again, and obtain:
\begin{corollary}[Bi-causal order]
  \label{cor:minimal}
  The minimally strong inequality
  \begin{align}
    & \graphicaltest\left(G, 1-\frac{1}{2|\mathcal A(G)|}\right) & \text{(for any~$G\in\mathcal M_n$)}
  \end{align}
  is a $n$-party graphical test for bi-causal order.
  Moreover, this graphical test is facet-defining for~$\mathcal P_n^\text{bi}$.
\end{corollary}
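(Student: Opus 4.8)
The plan is to obtain this test along exactly the same route used for Corollaries~\ref{cor:cyclefencemobius} and~\ref{cor:kefalopoda}: take the facet-defining inequality of the digraph polytope supplied by Theorem~\ref{thm:notstrongfacets}, feed it through the lifting lemma (Lemma~\ref{lemma:lifting}), and convert the resulting halfspace into a bound on the winning probability of the game.

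First I would verify that the three hypotheses of Lemma~\ref{lemma:lifting} are met. The body $\mathcal P_n^\text{bi}$ is an operational 0/1 polytope, and both it and its digraph polytope $\mathcal Q_n^\text{notstrong}$ are full-dimensional — the latter via the affinely independent family of digraphs with at most one arc already invoked in the proof of Theorem~\ref{thm:notstrongfacets}. The candidate is the minimally strong inequality $(\bm\alpha(G), |\mathcal A(G)|-1)$ of Eq.~\eqref{eq:nsfacetsminimal}. Its weight vector $\bm\alpha(G)$ is $0/1$-valued, hence non-negative; and since any strong digraph on $n\geq 2$ vertices contains a directed cycle and therefore at least two arcs, the inequality has two or more nonzero coordinates and is nontrivial.

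With the hypotheses in place I would apply Lemma~\ref{lemma:lifting} directly, so that the lifted inequality $((\bm\alpha(G),-\bm\alpha(G)), |\mathcal A(G)|-1)$ is facet-defining for $\mathcal P_n^\text{bi}$. It then remains only to read off the winning probability. Instantiating the general identity derived in Section~\ref{sec:graphicalgames} with $\bm w = \bm\alpha(G)$ (so that $\mathcal A(\bm w)=\mathcal A(G)$) and $c=|\mathcal A(G)|-1$ gives
\begin{equation}
  \Pr[\win(G)] \leq \frac{1}{2} + \frac{|\mathcal A(G)|-1}{2|\mathcal A(G)|} = 1 - \frac{1}{2|\mathcal A(G)|}\,,
\end{equation}
which is precisely the claimed threshold $t$.

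Because this is a direct specialization of an already-proven lemma, I do not expect a genuine obstacle. The only points demanding care are confirming nontriviality, i.e.\ that $|\mathcal A(G)|\geq 2$ for every $G\in\mathcal M_n$, and the full-dimensionality of $\mathcal P_n^\text{bi}$; both parallel the bookkeeping already carried out in the kefalopoda case, so the argument should go through essentially verbatim.
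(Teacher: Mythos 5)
Your proposal is correct and follows exactly the paper's route: Theorem~\ref{thm:notstrongfacets} supplies the facet-defining inequality $(\bm\alpha(G), |\mathcal A(G)|-1)$ of $\mathcal Q_n^\text{notstrong}$, Lemma~\ref{lemma:lifting} lifts it to $\mathcal P_n^\text{bi}$, and the general rewriting of Section~\ref{sec:graphicalgames} turns the bound into $1-\tfrac{1}{2|\mathcal A(G)|}$. Your explicit verification of the lemma's hypotheses (non-negativity, nontriviality via $|\mathcal A(G)|\geq 2$ for strong digraphs, and full-dimensionality) is sound bookkeeping that the paper leaves implicit.
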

For~$n=3$ parties, there are only two non-isomorphic minimally strong digraphs: the~$3$-cycle and the digraph with two~$2$-cycles.
Under the constraint of bi-causal order, the chance of winning these games is upper bounded by~$5/6$ and~$7/8$, respectively.
In Fig.~\ref{fig:minstrongtest}, we present all minimally strong tests of bi-causal order for~$n=4$ parties.
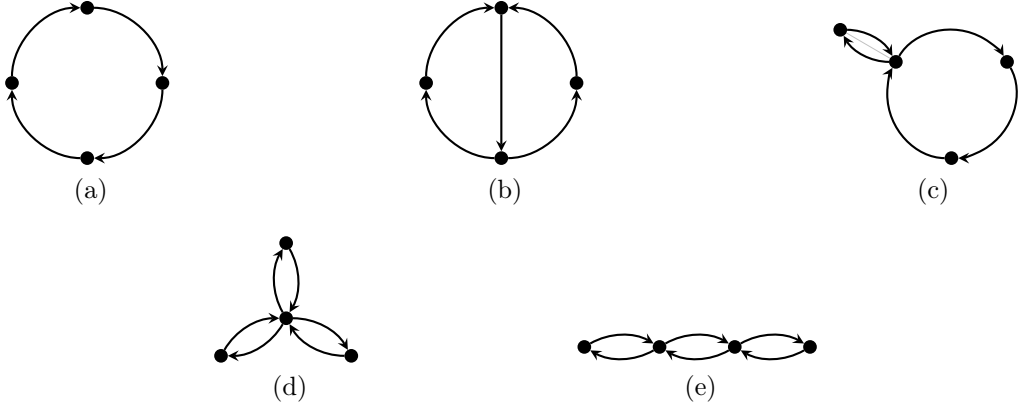
\begin{figure}
  \centering
  \subfloat[\label{subfig:4cyclems}]{%
    \begin{tikzpicture}
      \def\tot{4}
      \pgfmathsetmacro{\angledelta}{360/\tot}
      \foreach \x in {1, 2, 3, ..., \tot} {
        \pgfmathsetmacro{\angle}{(\x-1) * \angledelta + \angledelta/2 + 45}
        \fill (\angle:\radiusmain) circle (\vertexradius);
        \draw[arc] (\angle:\radiusmain) arc[start angle=\angle, delta angle=-\angledelta,radius=\radiusmain];
      }
    \end{tikzpicture}
  }
  \hfill
  \subfloat[\label{subfig:3cycles}]{%
    \begin{tikzpicture}
      \def\tot{4}
      \pgfmathsetmacro{\angledelta}{360/\tot}
      \foreach \x in {1, 2, 3, ..., \tot} {
        \pgfmathsetmacro{\angle}{(\x-1) * \angledelta + \angledelta/2 + 45 + 90}
        \fill (\angle:\radiusmain) circle (\vertexradius);
        \ifnum\x<3
          \draw[arc] (\angle:\radiusmain) arc[start angle=\angle, delta angle=-\angledelta,radius=\radiusmain];
        \else
          \draw[arcrev] (\angle:\radiusmain) arc[start angle=\angle, delta angle=-\angledelta,radius=\radiusmain];
        \fi
      }
      \pgfmathsetmacro{\angleA}{(4-1) * \angledelta + \angledelta/2 + 45 + 90}
      \pgfmathsetmacro{\angleB}{(2-1) * \angledelta + \angledelta/2 + 45 + 90}
      \draw[arc] (\angleA:\radiusmain) -- (\angleB:\radiusmain);
    \end{tikzpicture}
  }
  \hfill
  \subfloat[\label{subfig:32cycle}]{%
    \begin{tikzpicture}
      \def\tot{3}
      \pgfmathsetmacro{\radius}{sqrt(pow(\radiusmain,2)/2)/2 + \radiusmain/2}
      \pgfmathsetmacro{\angledelta}{360/\tot}
      \foreach \x in {1, 2, 3, ..., \tot} {
        \pgfmathsetmacro{\angle}{(\x-1) * \angledelta + \angledelta/2 + 90}
        \fill (\angle:\radius) circle (\vertexradius);
        \draw[arc] (\angle:\radius) arc[start angle=\angle, delta angle=-\angledelta,radius=\radius];
      }
      \pgfmathsetmacro{\angle}{(1-1) * \angledelta + \angledelta/2 + 90}
      \fill (\angle:\radius) -- ++(\angle:\radius) coordinate (a) circle (\vertexradius);
      \draw[arc] (\angle:\radius) to[bend left] (a);
      \draw[arc] (a) to[bend left] (\angle:\radius);
    \end{tikzpicture}
  }

  \hfill
  \subfloat[\label{subfig:32cycles}]{%
    \begin{tikzpicture}
      \def\tot{3}
      \pgfmathsetmacro{\angledelta}{360/\tot}
      \fill (0,0) circle (\vertexradius);
      \foreach \x in {1, 2, 3, ..., \tot} {
        \pgfmathsetmacro{\angle}{(\x-1) * \angledelta + \angledelta/2 - 90}
        \fill (\angle:\radiusmain) circle (\vertexradius);
        \draw[arc] (\angle:\radiusmain) to[bend left] (0,0);
        \draw[arcrev] (\angle:\radiusmain) to[bend right] (0,0);
      }
    \end{tikzpicture}
  }
  \hfill
  \subfloat[\label{subfig:2cycleschain}]{%
    \begin{tikzpicture}
      \fill (0,0) circle (\vertexradius);
      \fill (\radiusmain,0) circle (\vertexradius);
      \draw[arc] (\radiusmain,0) to[bend left] (0,0);
      \draw[arcrev] (\radiusmain,0) to[bend right] (0,0);
      \fill (2*\radiusmain,0) circle (\vertexradius);
      \draw[arc] (2*\radiusmain,0) to[bend left] (\radiusmain,0);
      \draw[arcrev] (2*\radiusmain,0) to[bend right] (\radiusmain,0);
      \fill (3*\radiusmain,0) circle (\vertexradius);
      \draw[arc] (3*\radiusmain,0) to[bend left] (2*\radiusmain,0);
      \draw[arcrev] (3*\radiusmain,0) to[bend right] (2*\radiusmain,0);
    \end{tikzpicture}
  }
  \hfill
  \subfloat{}
  \caption{%
    All minimally strong graphical games for~$n=4$ parties.
    When the parties win the game (a) with a chance higher than~$7/8$,
    or the games (b) -- (c) with a chance higher than~$9/10$,
    or the games (d) -- (e) with a chance higher than~$11/12$,
    then their statistics are \emph{incompatible} with any bi-causal order.
  }
  \label{fig:minstrongtest}
\end{figure}

\subsubsection{An Uncategorised Game}
The ``twisted cylinder'' digraph~$T$ shown in Fig.~\ref{fig:twisteccylinder} is another graphical game for bi-causal order.
\begin{figure}
  \centering
  \begin{tikzpicture}
    \fill (0,\radiusmain) circle (\vertexradius);
    \fill (0,-\radiusmain) circle (\vertexradius);
    \draw[arc] (0,\radiusmain) to[bend left] (0,-\radiusmain);
    \draw[arcrev] (0,\radiusmain) to[bend right] (0,-\radiusmain);
    \fill (5*\radiusmain,\radiusmain) circle (\vertexradius);
    \fill (5*\radiusmain,-\radiusmain) circle (\vertexradius);
    \draw[arc] (5*\radiusmain,\radiusmain) to[bend left] (5*\radiusmain,-\radiusmain);
    \draw[arcrev] (5*\radiusmain,\radiusmain) to[bend right] (5*\radiusmain,-\radiusmain);
    \draw[arc] (0,\radiusmain) -- (5*\radiusmain,\radiusmain);
    \draw[arc] (0,-\radiusmain) -- (5*\radiusmain,-\radiusmain);
    \draw[arcrev] (0,\radiusmain) -- (5*\radiusmain,-\radiusmain);
    \draw[arcrev] (0,-\radiusmain) -- (5*\radiusmain,\radiusmain);
  \end{tikzpicture}
  \caption{%
    This ``twisted cylinder'' is a $4$-party graphical game for bi-causal order.
  }
  \label{fig:twisteccylinder}
\end{figure}
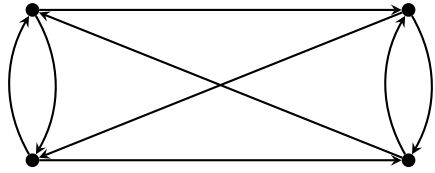
We derived this using \texttt{cddlib}~\cite{cddlib}.
This digraph is strong, but \emph{not} minimally strong:
it remains strong even after removing one of the arcs in a $2$-cycle.
The inequality~$(\bm \alpha(T), 6)$ is facet-defining for~$\mathcal Q_4^\text{bi}$.
Analogously, the inequality
\begin{equation}
  \graphicaltest\left(
    T, \frac{1}{2}+\frac{6}{2|\mathcal A(T)|} = \frac{7}{8}
  \right)
\end{equation}
is facet-defining for~$\mathcal P_4^\text{bi}$.
So far, we have not managed to derive the family of graphical games for bi-causal order~$T$ belongs to.

\subsection{Relations Among Causal Constraints}
\label{subsec:hamilton}
By the definitions of the causal constraints, it is immediate that
\begin{equation}
  \mathcal C_n^\text{static}
  \subseteq
  \mathcal C_n^\text{causal}
  \subseteq
  \mathcal C_n^\text{bi}
  \,.
\end{equation}
Similar inclusion relations hold for the digraph polytopes and the polytopes of single-output correlations.
It is remarkable, however, that the \emph{Hamiltonian-cycle} inequality~$(\bm\alpha(G), n-1)$ with~$G\cong C_n^n$,
is \emph{simultaneously} facet-defining for all three digraph polytopes~$\mathcal Q_n^\text{DAG},\mathcal Q_n^\text{src},\mathcal Q_n^\text{notstrong}$.
Analogously, the tests described by the Hamiltonian cycle are facet-defining for the corresponding correlation polytopes (this was already known~\cite{baumann2025} for~$\mathcal P_n^\text{causal}$),
because the Hamiltonian cycle is itself a kefalopoda and a minimally strong digraph.
Although this Hamiltonian game is insensitive to distinguish among the constraints studied here, in uncovers a ``common denominator.''
Thus, it plays an analogous r\^{o}le to the non-signaling inequalities for the Bell non-local games.
With this in mind, we are inclined to study the \emph{weakest causal constraint} under that common denominator.
We can do this by investigating the digraph polytope that is constraint only by the trivial inequalities and the Hamiltonian-cycle inequalities.
As a preliminary result, we find that for~$n=4$ parties, this Hamiltonian digraph polytope has \emph{non-integer extremal points} (see Fig.~\ref{fig:probabext}, we found those using \texttt{cddlib}~\cite{cddlib}).
\begin{figure}
  \centering
  \subfloat[\label{subfig:probabext1}]{%
    \begin{tikzpicture}
      \fill (0,0) circle (\vertexradius);
      \foreach \angle in {90, 210, 330} {
        \fill (\angle:\radiusmain) circle (\vertexradius);
        \draw[arc,dashed] (\angle:\radiusmain) arc[start angle=\angle, delta angle=-120,radius=\radiusmain];
        \draw[arc] (0,0) to[bend left] (\angle:\radiusmain);
        \draw[arcrev] (0,0) to[bend right] (\angle:\radiusmain);
      }
    \end{tikzpicture}
  }
  \hfill
  \subfloat[\label{subfig:probabext2}]{%
    \begin{tikzpicture}
      \fill (0,0) circle (\vertexradius);
      \foreach \angle in {90, 210, 330} {
        \fill (\angle:\radiusmain) circle (\vertexradius);
        \draw[arc,dashed] (\angle:\radiusmain) arc[start angle=\angle, delta angle=-120,radius=\radiusmain];
        \draw[arc] (0,0) to[bend left] (\angle:\radiusmain);
        \draw[arcrev] (0,0) to[bend right] (\angle:\radiusmain);
      }
      \draw[arc] (90:\radiusmain) to[bend right] (210:\radiusmain);
    \end{tikzpicture}
  }
  \hfill
  \subfloat[\label{subfig:probabext3}]{%
    \begin{tikzpicture}
      \fill (0,0) circle (\vertexradius);
      \foreach \angle in {90, 210, 330} {
        \fill (\angle:\radiusmain) circle (\vertexradius);
        \draw[arc,dashed] (\angle:\radiusmain) arc[start angle=\angle, delta angle=-120,radius=\radiusmain];
        \draw[arc,dashed] (\angle:\radiusmain) to[bend right] (\angle+120:\radiusmain);
        \draw[arc] (0,0) to[bend left] (\angle:\radiusmain);
        \draw[arcrev] (0,0) to[bend right] (\angle:\radiusmain);
      }
    \end{tikzpicture}
  }
  \caption{%
    Examples of non-integer extremal points of the Hamiltonian digraph polytope.
    The dashed arcs have weight~$1/2$, and the solid ones weight~$1$.
    The total weight of the arcs along any Hamiltonian cycle never exceeds~$3$.
  }
  \label{fig:probabext}
\end{figure}
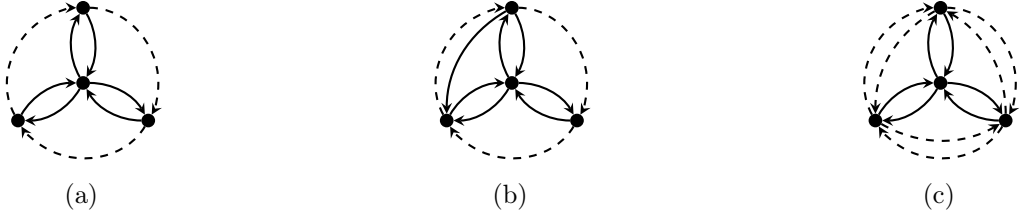
Also, these extremal points
violate a~$2$-cycle inequality (they contain~$C_2^2$),
violate a kefalopoda inequality (they contain the kefalopoda shown in Fig.~\ref{subfig:2cycletree}),
and violate a minimally strong inequality (they contain the minimally strong digraph shown in Fig.~\ref{subfig:32cycles}).
All these violations are maximal.
This digraph may play an analogous rôle to the PR box~\cite{pr1994} in the field of non-locality.
We leave this for future research.

\section{Weakly Causal Correlations}
The kefalopoda inequalities are possibly the only nontrivial facet-defining inequalities of the source-digraph polytope.
This is in stark contrast to the other digraph polytopes studied here, i.e., the DAG polytope and the not-strong polytope.
There, we \emph{know} that additional facet-defining inequalities exist in general; the sets of facets given by Theorem~\ref{thm:dagfacets} and Theorem~\ref{thm:notstrongfacets} are incomplete.
This observation motivates us to turn Theorem~\ref{thm:sdfacets} into a definition of a \emph{weak} version of the source polytope.
This then then translates to a causal constraint, the constraint of \emph{weakly causal correlations.}
We then present an efficient algorithm to decide whether some single-output correlations~$P_{A|S,R,X}$ are weakly causal or not.
\begin{definition}
  The polytope~$\tilde{\mathcal Q}_n^\text{src}\subseteq\mathbb R^{n(n-1)}$ is the set of vectors that satisfy the following inequalities:
  \begin{align}
    & (- \bm 1_{i,j}, 0) & \text{(for any $(i,j)\in[n]^2_{\neq}$; trivial)}
    \,,
    \\
    & (\bm 1_{i,j}, 1) & \text{(for~$n\geq 3$ and any $(i,j)\in[n]^2_{\neq}$; trivial)}
    \,,
    \\
    & (\bm \alpha(\kappa_f), n-1) & \text{(for any fixed-point-free function~$f:[n]\rightarrow[n]$)}
    \,.
  \end{align}
\end{definition}
We know that~$\tilde{\mathcal Q}_n^\text{src}=\mathcal Q_n^\text{src}$ for~$n\leq 4$.
In general, this identity is unsolved.\footnote{%
  If for some~$n$ these polytopes differ, then~$\tilde{\mathcal Q}_n^\text{src}$ will have \emph{non-integer} extremal points.
}
We now define the corresponding polytope of correlations.
This polytope is motivated by the lifting lemma (Lemma~\ref{lemma:lifting}) and the facet-generation lemma (Lemma~\ref{lemma:generative}).

\begin{definition}
  The polytope~$\mathcal U(\mathcal Q)\subseteq\mathbb R^{2d}$ is the set of vectors that satisfy the following inequalities:
  \begin{align}
    & (- \bm 1_{i}, 0) & \text{(for any $i\in[2d]$; trivial)}
    \,,
    \\
    & (\bm 1_{i}, 1) & \text{(for any~$i\in[2d]$; trivial)}
    \,,
    \\
    & ((\bm\phi \bm w,-\bm \phi \bm w), c) & \text{(for any $(\bm w,c)\in\mathcal H(\mathcal Q)$,~$\bm\phi\in\{-1,+1\}^d$)}
    \,,
  \end{align}
  where~$\mathcal H(\mathcal Q)$ is the set of facet-defining inequality for~$\mathcal Q$.
  The polytope of \emph{$n$-party weakly causal correlations} is~$\tilde{\mathcal P}_n^\text{causal}:=\mathcal U(\tilde{\mathcal Q}_n^\text{src})$.
\end{definition}
Thus, the weakly causal single-output correlations~$P_{A|S,R,X}$ are exactly those that satisfy the kefalopoda inequalities, and all ``rotations'' thereof.
In fact, the signs from~$\bm\phi$ simply alter the game for certain sender-receiver pairs,
such that the game is won whenever the receiver outputs the \emph{negated} input given to the sender.
Luckily for us, the polytopes~$\mathcal Q$ and~$\mathcal U(\mathcal Q)$ are in close relation:
There exists a projection of the vectors in~$\mathcal U(\mathcal Q)$, such that deciding membership in that polytope is equivalent
to deciding membership in~$\mathcal Q$ for the projected vector.
This lemma, that we state next, is complementary to the lifting lemma used in the previous parts.
The projection is given by the map~$\Gamma$ (see Eq.,~\eqref{eq:gamma}) extended to non-integer vectors.
\begin{lemma}
  If all nontrivial facet-defining inequalities a polytope~$\mathcal Q\subseteq[0,1]^d$ are non-negative,
  then, whenever~$\bm p=(\bm p^0,\bm p^1)\in[0,1]^{2d}$,
  \begin{equation}
    \bm p \in \mathcal U(\mathcal Q)
    \Longleftrightarrow
    |\bm p^0 - \bm p^1|\in\mathcal Q
    \,,
  \end{equation}
  where the absolute values are taken element-wise.
\end{lemma}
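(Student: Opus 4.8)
The plan is to reduce both directions of the equivalence to a single pointwise identity about maximising over the sign vectors~$\bm\phi$. Write~$\bm u := \bm p^0 - \bm p^1$ and~$\bm q := |\bm p^0 - \bm p^1|$, so that the standing hypothesis~$\bm p\in[0,1]^{2d}$ immediately gives~$\bm q\in[0,1]^d$. The key observation is that for any facet weight vector~$\bm w$ that is non-negative (which, by hypothesis, holds for every nontrivial facet-defining inequality~$(\bm w,c)$ of~$\mathcal Q$) the objective is separable across coordinates, so that
\begin{equation}
  \max_{\bm\phi\in\{-1,+1\}^d}(\bm\phi\bm w)\cdot\bm u
  =
  \sum_i \max_{\phi_i\in\{-1,+1\}}\phi_i w_i u_i
  =
  \sum_i w_i|u_i|
  =
  \bm w\cdot\bm q
  \,,
\end{equation}
where the middle equality uses~$w_i\geq 0$ (the single sign choice~$\phi_i=\operatorname{sign}(u_i)$ simultaneously realises the absolute value in every coordinate). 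Consequently, quantifying the rotated lifting over all sign patterns collapses to one test against~$\bm q$:
\begin{equation}
  \Bigl(\forall\bm\phi\in\{-1,+1\}^d:\ (\bm\phi\bm w,-\bm\phi\bm w)\cdot(\bm p^0,\bm p^1)\leq c\Bigr)
  \iff
  \bm w\cdot\bm q\leq c
  \,.
\end{equation}
I would establish this equivalence first, as it is the conceptual heart of the statement; everything else is bookkeeping.

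With this identity in hand, both implications follow by unwinding the definition of~$\mathcal U(\mathcal Q)$. For~$|\bm p^0-\bm p^1|\in\mathcal Q\Rightarrow\bm p\in\mathcal U(\mathcal Q)$, I would note that~$\bm p$ meets the box constraints of~$\mathcal U(\mathcal Q)$ because~$\bm p\in[0,1]^{2d}$, and that for every nontrivial facet~$(\bm w,c)$ the validity~$\bm w\cdot\bm q\leq c$ (from~$\bm q\in\mathcal Q$) upgrades, via the right-to-left direction of the boxed equivalence, to the validity of all rotated liftings. For the converse, membership~$\bm p\in\mathcal U(\mathcal Q)$ asserts precisely that all rotated liftings hold, so the left-to-right direction yields~$\bm w\cdot\bm q\leq c$ for every nontrivial facet; together with~$\bm q\in[0,1]^d$ this shows that~$\bm q$ satisfies every facet-defining inequality of~$\mathcal Q$, hence~$\bm q\in\mathcal Q$.

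Two points deserve care. First, the rotated-lifting family in the definition of~$\mathcal U(\mathcal Q)$ must be read as ranging over the \emph{nontrivial} facets~$(\bm w,c)$ of~$\mathcal Q$ only: a single-coordinate facet such as~$q_i\geq 0$ would, upon lifting and rotating, impose both~$p^0_i\leq p^1_i$ and~$p^1_i\leq p^0_i$ and thereby collapse the coordinate, whereas the box constraints on~$\bm p$ already encode~$\bm q\in[0,1]^d$ and hence all trivial facets of~$\mathcal Q\subseteq[0,1]^d$. Second, the closing step of the converse---that a vector meeting every facet-defining inequality of~$\mathcal Q$ lies in~$\mathcal Q$---is the H-representation and requires~$\mathcal Q$ to be full-dimensional, which holds for the polytopes to which the lemma is applied. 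I expect the genuine subtlety to be confined to the faithful realisation of the absolute value through the maximisation over~$\bm\phi$, namely that a single optimal sign vector saturates the bound in all coordinates at once; the matching of trivial facets with the box constraints and the full-dimensionality point are routine given the paper's setup.
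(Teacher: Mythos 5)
Your proposal is correct and is essentially the paper's own argument: the two ingredients of your maximisation identity---the bound $\bm\phi_i\bm w_i(\bm p^0_i-\bm p^1_i)\leq \bm w_i|\bm p^0_i-\bm p^1_i|$ from non-negativity of~$\bm w$, and its saturation by the sign choice $\bm\phi_i=+1$ iff $\bm p^0_i\geq\bm p^1_i$---are exactly the two computations the paper runs, merely organised there as two contrapositive arguments instead of one max-over-$\bm\phi$ identity followed by direct implications. Your explicit caveats (reading~$\mathcal H(\mathcal Q)$ as the \emph{nontrivial} facets, the box constraints absorbing the trivial ones, and full-dimensionality for the H-representation step) are implicit in the paper's proof as well, so nothing in your route genuinely diverges.
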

\begin{proof}
  We prove both directions by contradiction.
  For the~$\Leftarrow$ direction, suppose that~$\bm p$ is \emph{not} in~$\mathcal P:=\mathcal U(\mathcal Q)$.
  This means there exists a \emph{violated} lifted inequality~$((\bm\phi\bm w,-\bm\phi\bm w),c)$, i.e.,
  \begin{equation}
    (\bm\phi\bm w,-\bm\phi\bm w)\cdot\bm p > c
    \,,
  \end{equation}
  where~$(\bm w,c)$ is a nontrivial and non-negative facet-defining inequality of~$\mathcal Q$, and~$\bm\phi\in\{-1,+1\}^d$.
  The left-hand side of this equation is
  \begin{equation}
    \sum_{i\in[d]}\bm\phi_i \bm w_i\left(\bm p^0_i - \bm p^1_i\right)
    \leq
    \sum_{i\in[d]}\bm w_i\left|\bm p^0_i - \bm p^1_i\right|
    =
    \bm w\cdot \left|\bm p^0 - \bm p^1\right|
    \,.
  \end{equation}
  Therefore, the projected vector is also not in~$\mathcal Q$.
  For the opposite direction, suppose that~$|\bm p^0 - \bm p^1|$ for some~$\bm p\in[0,1]^{2d}$ is \emph{not} in~$\mathcal Q$.
  Again, this means there exists a nontrivial and non-negative inequality~$(\bm w,c)$ that is violated:
  \begin{equation}
    \bm w \cdot \left|\bm p^0 - \bm p^1\right| > c
    \,.
  \end{equation}
  Now, the can express the left-hand side as
  \begin{equation}
    \sum_{i\in[d]}
    \bm w_i \left|\bm p^0_i - \bm p^1_i\right|
    =
    \sum_{i\in[d]}
    \bm\phi_i \bm w_i \left(\bm p^0_i - \bm p^1_i\right)
    =
    (\bm\phi\bm w,-\bm\phi\bm w)\cdot \bm p
    \,,
  \end{equation}
  where we define~$\bm\phi_i := 1$ if~$\bm p^0_i \geq \bm p^1_i$, and~$\bm \phi_i:=-1$ otherwise.
  Thus, also~$\bm p$ is not in~$\mathcal P$.
\end{proof}

\subsection{Efficient Digraph Algorithm}
By the above lemma, the problem of deciding whether some single-output correlations belong to the polytope of weakly causal correlations is reduced
to the problem whether some vector belongs to~$\tilde{\mathcal Q}_n^\text{src}$.
Towards an algorithm and complexity-theoretic treatment, we need to specify how the input is encoded.
Machines cannot treat real numbers.
So, in the following, we always consider~$d$-dimensional vectors of rational numbers.
Moreover, each rational number~$a/b$ is given as a pair~$(a,b)$ where the length of the binary encodings of~$a$ and~$b$ are upper bounded by~$p(d)$ for fixed polynomial~$p$.
We use~$\langle\bm v\rangle\in\mathbb Q^d_\text{poly}$ to denote the encoding of such a vector.
The decision problem for these digraphs is thus:

\decisionproblem{WeakSourceDigraph}{$n\geq 2$, $\langle\bm q\rangle\in\mathbb Q^{n(n-1)}_\text{poly}$}{Decide whether or not~$\bm q\in\tilde{\mathcal Q}_n^\text{src}$.}
\begin{theorem}
  The decision problem \textsc{WeakSourceDigraph} is in~$\cc{P}$.
\end{theorem}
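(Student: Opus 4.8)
The plan is to exhibit an explicit algorithm and argue it runs in polynomial time. Membership in~$\tilde{\mathcal Q}_n^\text{src}$ requires satisfying three families of inequalities. The two trivial families, $\bm q_{i,j}\geq 0$ and $\bm q_{i,j}\leq 1$, comprise only~$O(n^2)$ constraints, so I would simply verify~$\bm q\in[0,1]^d$ coordinate by coordinate, which is clearly polynomial. The only obstacle is the third family, the kefalopoda inequalities~$\bm\alpha(\kappa_f)\cdot\bm q\leq n-1$, of which there are roughly~$(n-1)^n$ --- one for each fixed-point-free~$f\colon[n]\to[n]$ --- far too many to enumerate directly.

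The key observation that defeats this apparent obstacle is that the kefalopoda inequalities are \emph{separable}. Since~$\mathcal A(\kappa_f)=\{f(i)\arc i\mid i\in[n]\}$, we have
\begin{equation}
  \bm\alpha(\kappa_f)\cdot\bm q
  =
  \sum_{i\in[n]}\bm q_{f(i),i}
  \,,
\end{equation}
and the only restriction on~$f$ is the per-index constraint~$f(i)\neq i$. Because the choice of each predecessor~$f(i)$ is uncoupled from the choices of the others, maximizing over all fixed-point-free~$f$ decomposes into~$n$ independent maximizations, one per column:
\begin{equation}
  \max_{f}\ \bm\alpha(\kappa_f)\cdot\bm q
  =
  \sum_{i\in[n]}\ \max_{j\in[n]\setminus\{i\}}\bm q_{j,i}
  \,.
\end{equation}
Hence the entire kefalopoda family is satisfied if and only if the single quantity~$M:=\sum_{i\in[n]}\max_{j\neq i}\bm q_{j,i}$ does not exceed~$n-1$.

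The algorithm is therefore to verify~$\bm q\in[0,1]^d$, to compute each column maximum~$m_i:=\max_{j\neq i}\bm q_{j,i}$, and to accept if and only if~$\sum_i m_i\leq n-1$. Computing the~$n$ column maxima costs~$O(n^2)$ comparisons of rationals, and the final sum and comparison are likewise polynomial; since every rational in~$\langle\bm q\rangle$ has encoding length bounded by a fixed polynomial in~$d$, all arithmetic is polynomial in the input length, placing the problem in~$\cc{P}$.

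The one point that needs a careful word --- and which I expect to be the only genuinely substantive step --- is the correctness of the decomposition: I must confirm that the column-wise optimum is actually attained by some kefalopoda inequality and, conversely, that no kefalopoda inequality exceeds~$M$. The forward direction is immediate, since setting~$f(i):=\arg\max_{j\neq i}\bm q_{j,i}$ defines a fixed-point-free function whose kefalopoda attains~$M$. For the converse I would note that for any fixed-point-free~$f$ the~$n$ arcs~$f(i)\arc i$ have pairwise distinct heads and are therefore distinct, so~$\kappa_f$ is a well-defined digraph with exactly~$n$ arcs and~$\bm\alpha(\kappa_f)\cdot\bm q=\sum_i\bm q_{f(i),i}\leq M$. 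Thus~$M$ is exactly the maximum of the left-hand side over all kefalopoda inequalities, and the test~$M\leq n-1$ is both necessary and sufficient.
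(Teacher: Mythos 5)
Your proposal is correct and is essentially the paper's own proof: both algorithms check the trivial inequalities~$\bm q\in[0,1]^d$ and then exploit the fact that the maximum of~$\bm\alpha(\kappa_f)\cdot\bm q$ over fixed-point-free~$f$ decomposes into independent per-receiver maxima~$\sum_{i\in[n]}\max_{j\neq i}\bm q_{j,i}$, compared against~$n-1$. Your write-up merely makes the correctness of this decomposition (attainability and the converse bound) more explicit than the paper does.
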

\begin{proof}
  We prove this statement by providing an explicit algorithm that runs in polynomial time in~$n$.
  The algorithm is straightforward.
  We interpret~$\bm q$ as the adjacency vector of a weighted digraph, where the entry~$\bm q_{s,r}$ is the weight of the arc~$s\arc r$.
  In the algorithm, we iterate over every vertex~$r\in[n]$.
  For each vertex, we consider all incoming arcs~$\mathcal I_r:=\{\bm q_{s,r}\}_s$.
  If one or more of the incoming arcs has a negative weight or a weight that exceeds~$1$, then we abort and answer ``no.''
  Otherwise, we take the maximum weight~$\max\mathcal I_r$, and add it to a register~$t$ initialized to~$0$.
  If, after considering each vertex, the total weight~$t$ is~$n-1$ or less, then we answer ``yes,'' otherwise we answer ``no.''
  This algorithm is correct for the following reason.
  If~$\bm q$ violates a trivial inequality, then this is detected.
  Otherwise, the value~$t$ computed is
  \begin{equation}
    \max_{f:[n]\rightarrow[n]} \bm\alpha(\kappa_f)\cdot\bm q
    \,,
  \end{equation}
  and the answer is correct again.
  This algorithm performs a linear number of comparisons and additions in the dimension~$n(n-1)$.
\end{proof}

\subsection{Efficient Correlations Algorithm}
Similar as above, we encode the probability vectors as vectors of rationals with polynomially bounded constituents.
The above algorithm then translates immediately to an algorithm for solving the decision problem of whether some single-output correlations~$P_{A|S,R,X}$ are weakly causal or not.
For that, simply perform a projection of the input vector first. 
\decisionproblem{WeaklyCausalCorrelations}{$n\geq 2$, $\langle\bm p\rangle\in\mathbb P^{2n(n-1)}_\text{poly}$}{Decide whether or not~$\bm p\in\tilde{\mathcal P}_n^\text{causal}$.}
\begin{corollary}
  The decision problem \textsc{WeaklyCausalCorrelations} is in~$\cc{P}$.
\end{corollary}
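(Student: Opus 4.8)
The plan is to obtain this corollary directly from the preceding projection lemma together with the polynomial-time algorithm for \textsc{WeakSourceDigraph}. Since $\tilde{\mathcal P}_n^\text{causal}=\mathcal U(\tilde{\mathcal Q}_n^\text{src})$ by definition, the first thing I would do is check that $\tilde{\mathcal Q}_n^\text{src}$ meets the lemma's hypothesis. Its trivial inequalities $(-\bm 1_{i,j},0)$ and $(\bm 1_{i,j},1)$ confine it to $[0,1]^{d}$ with $d=n(n-1)$, and its only nontrivial facet-defining inequalities are the kefalopoda inequalities $(\bm\alpha(\kappa_f),n-1)$, whose coefficient vectors $\bm\alpha(\kappa_f)\in\{0,1\}^{d}$ are non-negative. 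Hence the lemma applies and, for any $\bm p=(\bm p^0,\bm p^1)\in[0,1]^{2d}$, reduces the membership question $\bm p\in\tilde{\mathcal P}_n^\text{causal}$ to the question $|\bm p^0-\bm p^1|\in\tilde{\mathcal Q}_n^\text{src}$, which the previous theorem already decides in polynomial time.

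Concretely, the algorithm I would present runs in three steps on input $\langle\bm p\rangle$. First, it verifies that every entry of $\bm p=(\bm p^0,\bm p^1)$ lies in $[0,1]$; if some entry does not, a trivial defining inequality of $\mathcal U(\tilde{\mathcal Q}_n^\text{src})$ is violated and the algorithm answers ``no.'' Second, it forms the projected vector $\bm q:=|\bm p^0-\bm p^1|$ coordinate-wise. Third, it calls the \textsc{WeakSourceDigraph} procedure on $(n,\langle\bm q\rangle)$ and returns that answer verbatim. Correctness on the relevant inputs is exactly the statement of the projection lemma once the unit-cube check has succeeded, and on inputs outside $[0,1]^{2d}$ the first step gives the correct negative answer.

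For the running time, I would note that each of the $d$ coordinate differences $\bm p^0_i-\bm p^1_i=(a_id_i-b_ic_i)/(b_id_i)$ yields a rational whose numerator and denominator have bit-length bounded by the sum of the bit-lengths of the two operands, hence by a fixed polynomial in the input size; taking absolute values is free. Thus the projection stays within $\mathbb Q^{d}_\text{poly}$ and costs only $O(d)$ rational operations, after which the \textsc{WeakSourceDigraph} call is polynomial in $n$ by the previous theorem, so the whole procedure is polynomial and \textsc{WeaklyCausalCorrelations} is in $\cc{P}$. Honestly there is no real obstacle here, as the statement is a corollary of two established results; the only points that genuinely need care are confirming the non-negativity hypothesis of the lemma (immediate from the $0/1$ structure of the kefalopoda adjacency vectors) and checking that element-wise absolute differences of polynomially encoded rationals do not blow up the encoding length (they do not, since the bit-lengths merely add).
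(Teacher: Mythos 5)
Your proposal is correct and follows exactly the paper's route: reduce \textsc{WeaklyCausalCorrelations} to \textsc{WeakSourceDigraph} via the projection lemma (whose non-negativity hypothesis the kefalopoda inequalities satisfy), compute $|\bm p^0-\bm p^1|$ in polynomial time, and invoke the digraph algorithm. The paper states this in one line (``simply perform a projection of the input vector first''); your version merely adds the routine but welcome details of the unit-cube check and the bound on the encoding length of the projected rationals.
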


Note that since the dimensions of the polytopes of interest are quadratic
in~$n$, the running times of these algorithms are not only polynomial in the
size of the input, but also polynomial in the number of parties~$n$.
This is radically contrasted by the ``size'' of the polytope when we count the number of nontrivial facets.
The polytope~$\tilde{\mathcal Q}_n^\text{src}$ has~$(n-1)^n$ nontrivial facets; this is the number of fixed-point free functions~$f:[n]\rightarrow [n]$.
Then, taking into account the ``rotations''~$\bm\phi$, we get that the polytope~$\tilde{\mathcal P}_n^\text{causal}$ has~$(2(n-1))^n$ nontrivial facets.

\section{Conclusions and Open Questions}
We derive particularly simple inequalities for static causal order, definite causal order, and bi-causal order.
These inequalities, described by digraphs, serve as device-independent tests of different degrees of causality for any number of parties.
This result is made possible through a tight connection between single-output correlations and directed graphs.
Here, the causal constraints are mirrored in graph-theoretic properties, and the study of correlations is reduced to the study of graphs.
It is immediate that all graphical games for~$n\geq 3$ parties can be won perfectly using classical or quantum processes without definite causal order~\cite{baumeler2016space,ocb2012}.
This follows from the Ardehali-Svetlichny type process with genuinely multi-party ``acausality'' derived in Ref.~\cite{baumeler2022}.
This means that winning these games is not at odds with logical consistency.
In fact, the ``Ardehali-Svetlichny'' process is a resource with which \emph{any} graphical game for~$n\geq 3$ parties can be won perfectly.
This, in turn, implies that the single-output scenario studied here seems to be unsuitable for the study of \emph{antinomicity}~\cite{kunjwal2023,kunjwal2024},
which has been proposed as an ultimate level of ``acausality.''\footnote{But then, does antinomicity require the output of two or more parties?}
Nevertheless, as we show, this single-output scenario is sensitive to different degrees of causality.

The present research raises a series of immediate open problems.
One such problem that can be formalized precisely, is:
Is the set of weakly causal correlations the same as the set of causal correlations,
i.e., does the identity~$\mathcal P_n^\text{causal}=\tilde{\mathcal P}_n^\text{causal}$ hold in general?
Another open question is the derivation of all facet-defining inequalities of the bi-causal inequalities (see Fig.~\ref{fig:twisteccylinder}).
The observations on the Hamiltonian-cycle polytope (see Section~\ref{subsec:hamilton}) also initiate further research.
Firstly, does the Hamiltonian game represent some ``essence'' of causal order?
Secondly, is the \emph{converse} of the present study also feasible?
In the present article we derive digraphs from causal constraints.
Instead, one may derive causal constraints from digraphs.
It is possible that such an approach would lead to a fine-grained specification of causal constraints.
This then again, may lead to a hierarchy of computational power and information-processing capabilities in general.
Note also that further causal constraints are described in Ref.~\cite{abbott2017}, that lead up to the bi-causality constraint.
Finally, studying scenarios beyond the single-output scenario may be of interest.
There are two reasons for that.
The first reason is that for two or more outputs, the scenario may become sensitive to quantum-over-classical advantages and the study of antinomicity.
The second reason is that the Hamiltonian game is a \emph{constraint version} of the guess-your-neighbor's-input game~\cite{almeida2010,branciard2015}.
Here, the input of a single neighbour must be guessed.
There, all parties must guess their neighbour's input simultaneously.
This points to a series of intermediate problems that may share similar qualities.

\section*{Acknowledgements}
\noindent
We thank Mateus Araújo for helpful discussions, and
the organizers of the 2nd FoQaCiA workshop, i.e., Rui Soares Barbosa, Anne Broadbent, Ernesto Galvão, and Ana Belén Sainz,
for the opportunity to present parts of these results at that workshop.
This work is supported by the Swiss National Science Foundation (SNF) through project~214808, and
by the Program of Concerted Research Actions (ARC) of the Universit\'e libre de Bruxelles and from the F.R.S.-FNRS under project CHEQS within the Excellence of Science (EOS) program.

\bibliographystyle{elsarticle-num}
\bibliography{refs.bib}
\end{document}